\newcommand{\psfrage}[1]{{\color{blue}{\sf[PS: #1]}}}
\newcommand{\hpfrage}[1]{{\color{violet}\sf[HP: #1]}}
\newcommand{\fkfrage}[1]{{\color{teal}\sf[FK: #1]}}
\renewcommand{\psfrage}[1]{} \renewcommand{\hpfrage}[1]{} \renewcommand{\fkfrage}[1]{}
\pgfplotsset{compat=newest}
\pgfplotsset{every axis/.style={scale only axis}}
\pgfplotsset{
  major grid style={thin,dotted},
  minor grid style={thin,dotted},
  ymajorgrids,
  yminorgrids,
  every axis/.append style={
    line width=0.7pt,
    tick style={
      line cap=round,
      thin,
      major tick length=4pt,
      minor tick length=2pt,
    },
  },
  legend cell align=left,
  legend style={
    line width=0.7pt,
    /tikz/every even column/.append style={column sep=3mm,black},
    /tikz/every odd column/.append style={black},
  },
  legend style={font=\small},
  title style={yshift=-2pt},
  enlarge x limits=0.04,
  every tick label/.append style={font=\footnotesize},
  every axis label/.append style={font=\small},
  every axis y label/.append style={yshift=-1ex},
  /pgf/number format/1000 sep={},
  axis lines*=left,
  xlabel near ticks,
  ylabel near ticks,
  axis lines*=left,
  label style={font=\footnotesize},       
  tick label style={font=\footnotesize},
  plotBigComparison/.style={
    width=50.0mm,
    height=50.0mm,
  },
  plotPaCHashComparison/.style={
    width=50.0mm,
    height=40.0mm,
  },
  plotIoVolume/.style={
    width=50.0mm,
    height=40.0mm,
  },
  plotMaximumLoadFactor/.style={
    width=50.0mm,
    height=35.0mm,
  },
  plotSmallHist/.style={
    major grid style={white},
    minor grid style={white},
    ymajorticks=false,
    width=40.0mm,
    height=7.0mm,
  },
}
\crefname{listing}{Algorithm}{Algorithms}
\crefname{lstlisting}{Algorithm}{Algorithms}
\Crefname{lstlisting}{Algorithm}{Algorithms}
\crefname{@theorem}{Theorem}{Theorems}
\Crefname{@theorem}{Theorem}{Theorems}
\newcommand{\myparagraph}[1]{\subparagraph*{#1}}
\newcommand{\anonymous}[1]{\ifx\authoranonymous\relax\textcolor{red}{Anonymous}\else{#1}\fi}
\newcommand{\punkt}{\text{.}}  %
\newcommand{\Is}{\ensuremath{\mathbin{:=}}}
\newcommand{\seq}[1]{\langle#1\rangle}
\newcommand{\binomial}[2]{\binom{#1}{#2}}
\newcommand{\Om}[1]{\mathrm{\Omega}\!\left(#1\right)}
\newcommand{\Oh}[1]{\mathrm{O}\!\left( #1\right)}
\newcommand{\Ohtilde}[1]{\ensuremath \tilde{\mathrm{O}}\!\left( #1\right)}
\newcommand{\Objects}{S}
\newcommand{\Bs}{\mathtt{B}}
\newcommand{\BsMinus}{\bar{\mathtt{B}}}
\newcommand{\Ms}{\mathtt{M}}
\newcommand{\scan}{\textit{scan}}
\newcommand{\sort}{\textit{sort}}
\newcommand{\ceil}[1]{\ensuremath{\lceil #1\rceil}}
\let\oldcite\cite
\renewcommand\cite{\unskip~\oldcite}
\newcommand{\mytitle}{PaCHash: Packed and Compressed Hash Tables}
\title{\mytitle}
\author{Florian Kurpicz}{Karlsruhe Institute of Technology, Germany}{kurpicz@kit.edu}{https://orcid.org/0000-0002-2379-9455}{}
\author{Hans-Peter Lehmann}{Karlsruhe Institute of Technology, Germany}{hans-peter.lehmann@kit.edu}{https://orcid.org/0000-0002-0474-1805}{}
\author{Peter Sanders}{Karlsruhe Institute of Technology, Germany}{sanders@kit.edu}{https://orcid.org/0000-0003-3330-9349}{}
\newcommand{\myauthorrunning}{Kurpicz, Lehmann, Sanders}
\authorrunning{\myauthorrunning}
\keywords{compressed data structure, external hash table, perfect hashing}
\begin{document}
\maketitle

\begin{abstract}
  We introduce PaCHash, a hash table that stores its objects contiguously in an array without intervening space, even if the objects have variable size.
  In particular, each object can be compressed using standard compression techniques.
  A small search data structure allows locating the objects in constant expected time.
  PaCHash is most naturally described as a static external hash table where it needs a constant number of bits of internal memory per block of external memory.
  Here, in some sense, PaCHash beats a lower bound on the space consumption of $k$-perfect hashing.
  An implementation for fast SSDs needs about 5 bits of internal memory per block of external memory, requires only one disk access (of variable length) per search operation, and has small internal search overhead compared to the disk access cost.
  Our experiments show that it has lower space consumption than all previous approaches even when considering objects of identical size.
\end{abstract}

\section{Introduction}

\emph{Hash tables} support constant time key-based retrieval of
objects and are one of the most widely used data
structures. \emph{Compressed} data structures store data in a space
efficient way, preferably approaching the information theoretical
limit, and support various kinds of operations without the need to
decompress the entire data structure first \cite{gog2014theory,agarwal2015succinct,ferragina2005indexing,zhang2018efficient}. There has
been intensive previous work on both subjects but, surprisingly, the
intersection leaves big gaps. There is a lot of work on hash tables
which need little more space than
just the stored objects themselves \cite{koppl2022fast,bender2021all,arbitman2010backyard,fotakis2005space,Knu73,CuckooHashing}. 
However, all these approaches are only space efficient
for objects of identical size which makes it impossible to compress
the objects with variable bit-length codes.  Currently, most hash tables
for objects of variable size store references from table entries to the
data which entails a space overhead of at least $\log N$ bits per
object, where $N$ is the total size of all objects in the table.
Throughout this paper, $\log x$ stands for $\log_2 x$.
See \cref{s:prelim} for an introduction of basic techniques and \cref{tab:symbols} for a summary of the notation.

PaCHash eliminates fragmentation by \emph{packing} the objects
contiguously in memory without leaving free space. This makes it
impossible to use the approach of most previous hash tables to
directly use the hash function value to (approximately) locate the
objects. Instead, PaCHash uses a highly space efficient search data
structure that translates hash function values to memory
locations. More precisely, objects are first hashed to \emph{bins}.
The bins are stored contiguously in $m$ \emph{blocks} of size $\Bs$.
PaCHash essentially stores one bin index per block using a searchable compressed representation which enables finding the block(s) where a bin is stored.
In \cref{s:structure}, we describe the data structure in
more detail and in \cref{s:analysis} we analyze it.
Basically, for a tuning parameter $a$, the expected number of block reads to retrieve an object $x$ of size $|x|$ is about $1+1/a+|x|/\Bs$
while the internal memory data structure needs $2+\log(a)$ bits per \emph{block}.
We also discuss even smaller representations.

Even though hash tables like PaCHash have applications in object stores,
there is little previous work on space efficient hash tables for objects of variable
size (see \cref{s:related}). For objects of identical size $s$, the most space efficient
previous solutions are based on minimal perfect hashing (MPH)
\cite{esposito2020recsplit,BelazzouguiBD09} and require a constant number of bits per
object. PaCHash approximates this when choosing $\Bs=s$, also
needing a (slightly larger) constant number of bits per object but
lower construction time.  The picture changes when we look at larger
block sizes $\Bs=ks$ and the corresponding approach of \emph{minimal
  $k$-perfect hashing (M$k$PH)} \cite{BelazzouguiBD09}. Now, PaCHash still needs
only a constant number of bits per block, while there is a \emph{lower bound} of $\Om{\log k}$ bits
per block using M$k$PH (see \cref{s:analysis}).

Another fundamental data structure related to variable size objects and PaCHash is the variable-bit-length array (VLA).
A VLA is an array that allows direct access to objects of variable size.
Oftentimes, VLAs are used to efficiently access variable-length codes, e.g., Elias-\(\gamma\) and -\(\delta\) codes~\cite{Elias74} or Golomb codes~\cite{Golomb1966Encodings}, see \cref{s:related}.

\Cref{s:variants} describes different implementation variants of PaCHash including fully internal and fully external versions as well as a variant that is usable as VLA.
\Cref{s:experiments} describes experiments for an external implementation.
\Cref{s:conclusion} summarizes the results and discusses possible
directions for further research.

\begin{table}[t]
    \caption{Symbols used in this paper}
    \label{tab:symbols}
      \begin{tabularx}{\columnwidth}{ l X }
        \toprule
        $\Objects$& Set of objects \\
        $n$       & Number of objects \\
        $N$       & Total size of objects (bits) \\
        $p$       & Internal index data structure\\
        $a$       & Tuning parameter: Bins per block\\
        $m=N/\BsMinus$ & Number of blocks\\
        $\Bs$     & Block size (bits)\\
        $\BsMinus=\Bs-d$& Payload data per block\\
        $d\in0..\log\Bs$ & Encoding-dependent number of bits to store position of first bin of block\\
        \bottomrule
      \end{tabularx}
\end{table}

\myparagraph{Our Contribution.}
In this paper, we design the new hash table PaCHash.
The data structure supports objects of variable size with space overhead close to competitors that only support objects of identical size.
We analyze it thoroughly in a variant of the external memory model.
Finally, we compare our implementation with competitors from the literature.
As close contenders, we also implement \emph{Separator~Hashing} \cite{gonnet1988external, larson1984file} and \emph{Cuckoo~Hashing} \cite{azar1994balanced,pagh2003basic} with adaptions that partially allow variable size objects.

\section{Preliminaries}\label{s:prelim}
\myparagraph{Monotonic Sequences and Bit Vectors.}\label{par:eliasFano}
The index data structure of PaCHash mainly consists of a compressed representation of a monotonically increasing sequence $p=\seq{p_1,\ldots,p_k}$ of integers in the range $1..U$.
Searching boils down to predecessor queries in $p$, i.e., given a query integer $i$, the largest sequence element $\leq i$ is returned.

A well-known practical solution is \emph{Elias-Fano coding} \cite{Elias74,Fano71} which splits each $p_i$.
The $\log(U/k)$ least significant bits are directly stored in an array $L$ requiring $k\log(U/k)$ bits of space.
The $\log(k)$ most significant bits form a monotonic sequence of integers $H=\seq{u_1,\ldots,u_k}$ in the range $0..k$.
$H$ is stored in a bit vector of size $2k+1$ where $u_i$ is represented as a 1-bit in position $i+u_i$.
The total space usage therefore is $k(2 + \log(U/k)) + 1$ bits.
A predecessor query in $p$ executes a $\textit{select}_0$ query in
$H$ (finding the $i$-th 0-bit in $H$) which locates a cluster of
entries in $L$ that must contain the sought element.  Using additional
space $o(k)$, $\textit{select}_0$ queries can be answered in
constant time \cite{clark1997compact}.
In contrast to the general case, we will show that searching the cluster takes expected constant time in our application.

One can also interpret $p$ as the positions of 1-bits in a sparse bit vector which
enables even more compact representations.  For example, using
Succincter \cite{Patrascu2008Succincter}, about $k(1.44 + \log(U/k)) + 1$ bits
are achievable which is almost information theoretically optimal. In \cref{s:entropyCoding},
we give an even more compact format exploiting additional structure in
the bit vector.

\myparagraph{Model of Computation.}
We describe our results in a variant of the external memory model \cite{VitShr94} adapted to a situation
where objects are compressed to variable length sequences of bits.
We have a \emph{fast memory} of size $\Ms$ \emph{bits}. Accesses to a large \emph{external memory} are \emph{I/Os}
to blocks of $\Bs$ consecutive bits.
In contrast to the original model, we analyze both I/Os and internal work.
$\scan(N)$ denotes the cost (I/Os \emph{and} internal work) of scanning $N$ bits of data.%
\footnote{The internal work may depend on the encoding of the data. For example, we may
need $\Theta(N)$ machine instructions, or, a faster encoding may enable bit-parallel processing in $\Oh{N/\log n}$.}
$\sort(N)$ denotes the cost of sorting $N$ bits.%
\footnote{This entails $(N/\Bs)(1+\lceil\log_{\Ms/\Bs}(N/\Ms)\rceil)$ I/Os.
In this paper algorithms with linear internal work are possible exploiting random integer keys.
The cost also includes (de)coding overhead as in $\scan$ operations.}
In particular, we are interested in a high load factor, which is $N$ divided by the total external space usage.

\section{Related Work}\label{s:related}
The following section introduces related data structures from the literature.
\Cref{tab:competitors} provides an overview over the most important parameters.
\label{s:lsmTree}
There are close contenders in the form of \emph{object stores} from the database literature.
BerkeleyDB \cite{olson1999berkeley} uses a B$^+$-Tree \cite{comer1979ubiquitous} of order $d$, where each node branches between $d$ and $2d$ times.
LevelDB \cite{google2021leveldb} and RocksDB \cite{facebook2021rocksdb} use a Log-Structured Merge tree \cite{o1996log}, which stores multiple levels of a static data structure with increasing size.
Insertions go into the first level and when a level gets too full, it is merged into the next level.
SILT's \emph{LogStore} \cite{lim2011silt}, Facebook \emph{Haystack} \cite{beaver2010finding} and \emph{FAWN} \cite{andersen2009fawn} simply store a pointer of size $\Omega(\log N)$ to each object.
Real world instances often store very small objects \cite{nishtala2013scaling}, so the pointers add a considerable amount of overhead.

\myparagraph{Sorted Objects.}
\emph{LevelDB}'s static part \cite{google2021leveldb} stores objects in key order, enabling range searches and common-prefix-compression.
\emph{SortedStore} in SILT \cite{lim2011silt} sorts the objects by their hashed key and uses entropy coded tries as an index.
Pagh \cite{pagh2003basic} proposes to sort the $n$ objects by a hash function with range~$\geq n^3$.
The internal memory stores the first hash function value mapped to each block.
This data structure can be queried using a predecessor data structure in time $\Oh{\log\log n}$.
A novel idea in PaCHash is that it uses a hash function range based on the total space $N$ instead of the number of objects $n$, which enables efficient queries and compact representation.

\begin{table}[t]
  \caption{Space efficient object stores from the literature. 
        To unify the notation, we convert all values so that they refer to objects of size $s=256$ bytes stored in blocks of $\Bs=4096$ bytes.
        Each block contains $\Bs/s=16$ objects.
        Top: Stores for objects of identical size.
        Can be used for objects of variable size by using indirection or for some methods by accepting significantly lower load factors.
        Bottom: Dedicated variable size object stores.
        This table also contains VLAs, even though those are a slightly different field.
        }
    \label{tab:competitors}
    \begin{tabularx}{\columnwidth}{c l r r l }
        \toprule
        & Method                                              & Internal memory & Load Factor & I/Os \\ \midrule
        \multirow{11}{*}{{\rotatebox[origin=c]{90}{fixed size}}} 
        & Extendible Hashing \cite{fagin1979extendible}       &  \(\log m\) bits/block & 90\%       & 1    \\ %
        & Larson et al. \cite{larson1985external}             &    96 bits/block & $<$96\%    & 1    \\ %
        & SILT SortedStore \cite{lim2011silt}                 &    51 bits/block & 100\%      & 1    \\ %
        & Linear Separator \cite{larson1988linear}            &     8 bits/block & 85\%       & 1    \\ %
        & Separator \cite{gonnet1988external, larson1984file} &     6 bits/block & 98\%       & 1    \\
        & Robin Hood \cite{celia1988external}                 &     3 bits/block & 99\%       & 1.3  \\ %
        & Ramakrishna et al. \cite{ramakrishna1989dynamic}    &     4 bits/block & 80\%       & 1    \\ %
        & Jensen, Pagh \cite{jensen2008optimality}            &     0 bits/block & 80\%       & 1.25 \\ %
        & Cuckoo \cite{azar1994balanced,pagh2003basic}        &     0 bits/block & $<$100\%   & 2    \\
        & \textbf{PaCHash}, $a=1$                             &     2 bits/block & 100\%      & 2\footref{fn:io} \\
        & \textbf{PaCHash}, $a=8$                             &     5 bits/block & 100\%      & 1.13\footref{fn:io} \\
        \midrule
        \multirow{6}{*}{{\rotatebox[origin=c]{90}{variable size}}} 
        & SILT LogStore \cite{lim2011silt}                    &   832 bits/block & 100\%      & 1    \\ %
        & K{\"{u}}lekci \cite{Kulekci2014VLA} (VLA)           &   176 bits/block & \(<\)100\% & 0--11\footref{fn:kulecki}\\ %
        & SkimpyStash \cite{debnath2011skimpystash}           &    32 bits/block & $\leq$98\% & 8    \\ %
        & Blandford, Blelloch \cite{blandford2008compact} (VLA) &  16 bits/block & \(\leq\)50\% & 1  \\ %
        & \textbf{PaCHash}, $a=1$                             &     2 bits/block & 99.95\%    & 2.06\footref{fn:io} \\ %
        & \textbf{PaCHash}, $a=8$                             &     5 bits/block & 99.95\%    & 1.19\footref{fn:io} \\
        \bottomrule
    \end{tabularx}
\end{table}

\myparagraph{External Hash Tables.}
In external hash tables, each table cell corresponds to a fixed size block.
A common technique to support variable size objects is using indirection by internally storing a pointer to the object contents, possibly inlining parts of the objects \cite[Section~4]{lim2011silt}.
NVMKV \cite{marmol2015nvmkv} and KallaxDB \cite{chen2021kallaxdb} use an SSD as one large hash table and rely on SSD internals to handle empty blocks in a space efficient way.
Overflowing blocks due to hash collisions can be handled with perfect hashing \cite{larson1985external, ramakrishna1989dynamic} or using one of the following techniques.

\addtocounter{footnote}{1}
\footnotetext{\label{fn:io}PaCHash performs one I/O of variable size which is faster than the competitors' multiple I/Os.}
\addtocounter{footnote}{1}
\footnotetext{\label{fn:kulecki}Using $256$ byte objects, we have an alphabet size of $2^{8 \cdot 256}$, and $\log\log2^{8 \cdot 256}=11$.}

With \emph{Hashing with Chaining}, objects of overflowing blocks are stored in linked lists.
\emph{SkimpyStash} \cite{debnath2011skimpystash} chains objects using an external successor pointer for each object.
This trades internal memory space for latency because of multiple dependent I/Os.
Jensen and Pagh's \cite{jensen2008optimality} data structure reserves parts of the external memory as a buffer to reduce the need for chaining.
\emph{Extendible Hashing} \cite{fagin1979extendible} keeps a balanced tree of blocks.
Overflowing blocks are split into two children indexing one more bit of the hashed key.

Another method for resolving collisions is \emph{open addressing}, where each object could be located in multiple blocks.
\emph{Cuckoo~Hashing} \cite{CuckooHashing,DieWei07} locates each object in one of two (or more \cite{fotakis2005space}) independently hashed blocks.
Queries can load both blocks in parallel to reduce latency.
With \emph{Separator~Hashing} \cite{gonnet1988external, larson1984file}, each object has a sequence of blocks it could be stored in and a corresponding sequence of signatures.
When a block overflows, the objects with the highest signature values are pushed out to the next block in their respective sequences.
The internal memory stores the highest signature value of the objects placed in each block.
A query follows the object's sequence of blocks and stops when it finds a separator that is larger than the corresponding signature.
\emph{Linear hashing with separators} \cite{larson1988linear} is a dynamic variant with a linear probe sequence.
\emph{External Robin Hood Hashing} \cite{celia1988external} is similar to linear separator hashing, but it instead pushes out objects that are closest to their respective home address.
For each block, the internal memory stores the smallest distance of its objects to their respective home address.

\myparagraph{Variable-Bit-Length Arrays.}
Variable-bit-length arrays (VLAs) are arrays containing objects of variable size.
VLAs are closely related to PaCHash, which can be used also as VLA by
using the array index instead of the hash function, see \cref{s:variants}.
Conversely, PaCHash can be seen as a VLA where each entry stores a PaCHash bin.
However, most VLAs have some limitations that rule out storing the PaCHash bins efficiently.
A major difference to all VLAs described below is PaCHash allowing objects to span over multiple blocks of fixed size.

Navarro \cite[Section 3.2]{navarro2016compact} describes several
techniques for implementing VLAs.  However, none of them achieves the
same favorable space-time trade-off as the PaCHash VLA.  The closest one --
sampled pointers -- needs $N+n\log(N)/k$ bits of space
 with access
cost bounded by the time needed to skip $k$ objects. Note that this
time can be large when large objects need to be skipped.%
\footnote{Space could be reduced to
  $N+\frac{n}{k}(2+\log\frac{kN}{n})$ bit using Elias-Fano coding of
  the pointers -- resulting in similar space as the PaCHash VLA with $B=kN/n$ but
  with worse access costs.}  All the other described VLAs need several
bits of space overhead per object (multiplied with a factor that depends on the maximum or average
object size).

The VLA introduced by K{\"{u}}lekci~\cite{Kulekci2014VLA} uses wavelet trees~\cite{FerraginaM2000FMIndex} to partition the universe.
This makes the query time depend double logarithmically on the largest element stored in the VLA, a limitation not existing in PaCHash.

Blandford and Blelloch \cite{blandford2008compact} describe dynamic
VLAs and hash tables for variable sized objects.  However, their
technique incurs a constant factor of space overhead and is limited to
objects of bounded size.  They partition the objects into blocks, but
the blocks are generally only partially filled and do not allow
objects crossing block boundaries as in PaCHash.

\section{The PaCHash Data Structure}\label{s:structure}
We now present PaCHash in detail -- a hash table which considerably improves on the data structures from the literature.
It natively supports variable size objects without the need for indirection or empty cells.
It needs only a few bits of internal memory per \emph{block} and still needs only one single I/O operation (of variable length) per query.
PaCHash consists of an \emph{external part} subdivided into $m$ blocks of exactly $\Bs$ bits each that store the actual objects and an \emph{internal part} that allows finding the blocks storing an object.
\Cref{fig:example} gives an example for the external and internal memory data structures.
We deliberately use the word \emph{object} for the stored data because that highlights the flexibility of PaCHash.
Naturally, an object stores a key-value-pair, but it can also store only a value to obtain an external dictionary data structure.
It is even possible to use quotienting by storing the bin index inside the first object of each bin.

\begin{figure}[t]
  \centering
  \includegraphics[width=0.55\textwidth]{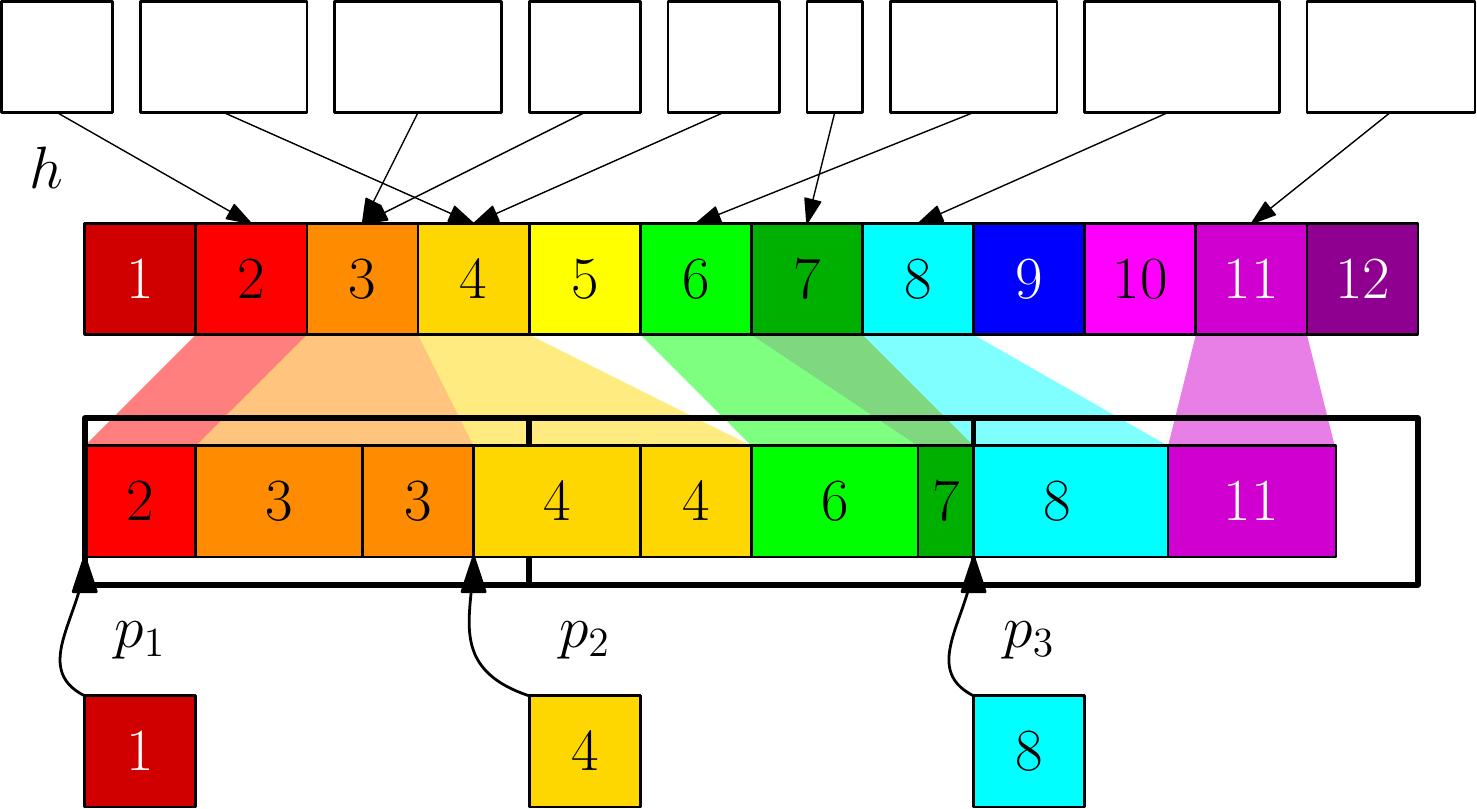}
    \caption{\label{fig:example} Example of PaCHash with $n=9$ objects and $m=3$ blocks. Using the hash function $h$, the objects are mapped to $12$ bins shown as colors, i.e., $a=4$. The bin content is then contiguously written to the external memory blocks. The internal memory index $p$ stores the first bin intersecting with each block. Note that locating bin 8 will return the range $2..3$, i.e., block 2 is loaded superfluously because there is no preceding empty bin that can encode whether it overlaps into the previous block. All other bins are located optimally.}
\end{figure}

\subsection{External Object Representation}
PaCHash stores the objects sorted by a hash function $h$ with a rather small domain, namely $h:K\rightarrow 1..am$, where $K$ is the set of possible keys, $m$ is the number of blocks and $a$ is a tuning parameter that we assume to be a power of two.
The hashes can collide and therefore group the objects into $am$ \emph{bins}.
The objects are now basically stored contiguously.
``Basically'' means that blocks may also contain information needed to find the first object or bin stored in them.
Refer to \cref{ss:encoding} for a discussion of alternative encodings.
Our default assumption is as follows: Each external block stores an offset of size $d=\log\Bs$ bits indicating the bit where the first bin in the block starts.
The remaining space stores the objects contiguously where an object may have an arbitrary size in bits. No space is left between subsequent objects.
In particular, object representations may overlap block boundaries. We assume that objects are encoded in a self-delimiting way, i.e., when we know where an object starts, we
can also find its end. For example, we could have a prefix-free code for the objects.
Construction first sorts the objects by their hash function value.
Then it scans the sorted objects, constructing both the external and the internal data structure along the way.
Refer to \cref{s:analysis} for more details.
If the internal data structure gets lost, for example due to a power outage, it can be re-generated using a single scan over the external memory data.

\subsection{Internal Memory Data Structure}
\label{sec:internalMemory}
Given a bin $b$, the internal memory data structure $p$ can be used to determine a (near\nobreakdash-)minimal range $i..j$ of block indices such that $b$ is stored in that range.
When performing a query, that block range can then be loaded from external memory and scanned for the sought key.
In practice, the resulting latency is often close to that of loading a single block since it includes only one disk seek.
Conceptually, $p$ stores a sequence $\seq{p_1,\ldots,p_m}$ where $p_i$ specifies the first bin whose data is at least partially contained in block~$i$.\footnote{An alternative would be to store the first bin that \emph{starts} in each block. This introduces a special case when a block is fully overlapped by a bin and needs slightly more work when performing queries.}
We can use a predecessor query on $p$ to determine~$i$.
When the predecessor is $b$ itself, we also need to load the previous block.
Another predecessor query or scanning then determines $j$, as illustrated by the pseudocode in \cref{fig:locate}.
To get the most out of this specification, we take empty bins into account: When a bin starts exactly at a block boundary and has an empty predecessor, we store that predecessor.
This implies that if (and only if) a bin $b$ starts at a block boundary and the previous bin $b-1$ is nonempty, retrieving bin $b$ will load one block too much.
Note that $p$ is a monotonically increasing sequence of integers which can be represented with different methods and trade-offs.

\begin{lstlisting}[label=fig:locate,float=t,
  caption={A query for an object $x$ calls $\textrm{locate}(x)$, loads the returned block range, and scans the blocks to find the object content.
  Determining the range boils down to predecessor queries on $p$.}
]
Function locate$(x)$
    $b$ := $h(x)$
    find $i$ such that $p_{i-1} < b \leq p_i$ &\hfill&// &{\footnotesize predecessor query}&
    if $p_i = b$ then $i$ := $i-1$ &\hfill&// &{\footnotesize $b$ may start in previous block}&
    find first $j$ such that $p_j > b$  &\hfill&// &{\footnotesize predecessor query or scan}&
    return $i..(j-1)$
\end{lstlisting}

\myparagraph{Elias-Fano Coding.}
A standard technique for storing monotonic sequences is Elias-Fano coding (see \cref{par:eliasFano}).
A way to interpret the vector $H$ of upper bits of an Elias-Fano coded sequence is that it stores the number of items having each possible combination of most significant bits in unary coding.
To locate the predecessor of item $b = au + \ell$ in the sequence, we 
calculate $\textit{select}_0(u-1)$ on the upper bits $H$, which gives us the start of a cluster of entries that all have most significant bits $u$.
The corresponding index in $L$ can be calculated by subtracting $(u-1)$.
We scan the cluster to find the largest index $i$ with $p_i \leq b$.
In our case, this takes constant expected time (see \cref{lem:analysisQueryWork}).
The internal memory usage is $m(2 + \log(a) + o(1))$ bits (see \cref{ss:analysisSpaceEf}).

\myparagraph{Bit Vector with Succincter.}
\label{sec:bit_vector_with_succincter}
It is also possible to store $p$ as a bit vector with rank and select support.
An item $p_i$ at position $i$ is then represented as a $1$-bit in position $i+p_i$.
The position of the predecessor of a bin $b$ can be found in constant time by calculating $\textit{select}_0(b)-b$.
The actual value can be calculated using a $\textit{select}_1$ query.
Because the bit vector is sparse, we can use Succincter \cite{Patrascu2008Succincter} to compress it and its rank and select structures down to about $m(1.44 + \log(a+1) + o(1))$ bits (see \cref{the:internal_space_requirements_succincter}).

\myparagraph{Entropy Coding.}\label{s:entropyCoding}
We observed that in practice, the bit vector is considerably more regular than a truly random one and thus allows additional compression.
This can be made fast by splitting it into ranges that are compressed individually, e.g., using dictionary compression.
In our experimental evaluation in \cref{s:realWorldDataSets}, we see a space-time trade-off, where we can achieve internal memory space consumption less than the theoretically best results described above in \cref{sec:bit_vector_with_succincter}.

\section{Analysis}\label{s:analysis}
We now formalize the properties of PaCHash in \cref{thm:main} which
basically says the following: External space is just the space needed
to store the variable sized objects plus possibly a few bits per block
to know where the first object in the block starts. Internal space is
about $2+\log a$ bits per block where $a$ is a tuning parameter that
also shows up in a term adding $1/a$ expected I/Os to the retrieval cost.

While proving the theorem, we discuss some variants and implications.
\Cref{ss:construction} considers construction cost and final space consumption, while
\cref{ss:analysisQuery} looks at I/Os and internal work of queries.

\begin{theorem}\label{thm:main}
  Consider $n$ objects of total size $N$ bits which are stored in $m$ blocks of size $\Bs$.
  Let $d\in 0..\log\Bs$ be an encoding-dependent number of bits needed to
  specify where the first bin or object of a block starts
  and $\BsMinus=\Bs-d$ be the payload size per block, i.e., $m=N/\BsMinus$.
  For a parameter $a$, let a random uniform hash function map the objects to $am$ bins.

  Then, PaCHash with Elias-Fano coding needs $m(2+\log a + o(1))$ bits
  of internal memory and $N(1+d/\BsMinus)$ bits of external memory.
  The construction cost is the same as that of sorting the objects using $am$ random integer keys.
  The expected time for retrieving an object of size $|x|$ bits is
  constant plus the time for scanning $1+|x|/\BsMinus+1/a$ blocks.
  The unsuccessful search time is the same except that $|x|$ is replaced by $0$.
\end{theorem}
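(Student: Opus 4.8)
The plan is to establish the four claims of \cref{thm:main} --- internal space, external space, construction cost, expected query cost --- one at a time, reusing the Elias-Fano size bound from \cref{par:eliasFano}, the cost functions $\scan$ and $\sort$ of the model, and \cref{lem:analysisQueryWork}. The space bounds are the easy part. PaCHash uses exactly $m$ blocks of $\Bs$ bits, and since $\Bs=\BsMinus+d$ and $m=N/\BsMinus$, the external memory is $m\Bs=(N/\BsMinus)(\BsMinus+d)=N(1+d/\BsMinus)$ bits. For the internal structure, $p=\seq{p_1,\ldots,p_m}$ is a monotone sequence of $m$ integers in $1..am$, so substituting $k=m$ and $U=am$ into the Elias-Fano size $k(2+\log(U/k))+1$ from \cref{par:eliasFano} and adding the $o(k)$-bit constant-time $\textit{select}_0$ support gives $m(2+\log a)+1+o(m)=m(2+\log a+o(1))$ bits, using that $a$ is a power of two.

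For construction, I would spell out the three steps: evaluate $h$ on every object, producing integer keys in $1..am$; sort the objects by these keys; then make one left-to-right pass over the sorted objects that simultaneously lays out the $m$ external blocks (writing each $d$-bit offset, splitting an object across a boundary when it does not fit, and applying the empty-bin-at-a-boundary rule) and emits the sequence $p$, from which the Elias-Fano representation is built in $\Oh{m}$ time. The last two steps cost $\scan(N)+\Oh{m}=\Oh{\scan(N)}$ and are dominated by the sort, so the total equals the cost of sorting the objects by $am$ random integer keys.

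The query bound is the substantive part. A lookup computes $b=h(x)$, runs $\mathrm{locate}(x)$, loads the returned block range, and scans it. First I would make the geometry explicit: concatenating the $\BsMinus$-bit payload regions of the $m$ blocks into one $N$-bit stream cut into pieces of length $\BsMinus$, bin $b$ occupies the contiguous sub-interval $[T_b,T_{b+1})$, where $T_b$ is the total size of bins $1,\ldots,b-1$ and $\ell_b=T_{b+1}-T_b$ is the size of bin $b$; one checks from the definition of $p$ and from $\mathrm{locate}$ that the predecessor queries return exactly the blocks this interval intersects, plus one superfluous preceding block precisely when $T_b$ lies on a block boundary and bin $b-1$ is nonempty. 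The number of intersected blocks is $1$ plus the number of block boundaries (multiples of $\BsMinus$) strictly inside $[T_b,T_{b+1})$, which deterministically equals $1+\ell_b/\BsMinus$ up to rounding. Taking expectations, the expected value of $\ell_b$ equals $|x|+\sum_{y\neq x}|y|/(am)\le |x|+N/(am)=|x|+\BsMinus/a$, so the expected number of loaded blocks is about $1+|x|/\BsMinus+1/a$, while the superfluous-block event adds only $\Oh{1/\BsMinus}$ more. For the internal work: $\textit{select}_0$ is $\Oh{1}$; scanning the resulting cluster of $L$ to find the predecessor index is $\Oh{1}$ in expectation by \cref{lem:analysisQueryWork}; locating $j$ is a second predecessor query (or a scan of $p$ over only the $\Oh{1+|x|/\BsMinus+1/a}$ positions that are scanned anyway); and scanning the loaded blocks is precisely the claimed $\scan$ term. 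For an unsuccessful search $x\notin\Objects$, bin $b=h(x)$ contains no object equal to $x$, so the expected value of $\ell_b$ drops to $\sum_{y\in\Objects}|y|/(am)=\BsMinus/a$ and the same estimate gives $1+1/a$ loaded blocks --- that is, $|x|$ replaced by $0$.

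The step I expect to be the main obstacle is turning the informal ``up to rounding'' into the clean bound $1+|x|/\BsMinus+1/a$: the number of block boundaries inside $[T_b,T_{b+1})$ depends on the offset $T_b \bmod \BsMinus$, and $T_b$ is not independent of $\ell_b$, so I would need to argue that, conditioned on $\ell_b$, this offset is spread out enough --- via an equidistribution argument over the many small object sizes composing $T_b$, together with the randomness of $b$ --- that the expected number of blocks beyond the first is exactly the expectation of $\ell_b/\BsMinus$, with no residual $\Oh{1}$ slack; the same spreading makes the probability that $T_b$ hits a boundary $\Oh{1/\BsMinus}$, disposing of the superfluous block. The other delicate ingredient, that the Elias-Fano cluster scanned during a predecessor query has $\Oh{1}$ expected length, is \cref{lem:analysisQueryWork}: a cluster corresponds to a window of $a$ consecutive bin indices whose total expected size is $aN/(am)=\BsMinus$, about one block, so only $\Oh{1}$ of the $p_i$ fall into it in expectation --- the same calculation as for the block count.
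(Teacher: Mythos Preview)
Your decomposition matches the paper's: external space is the one-line calculation you give; internal space is exactly \cref{lem:ef_construction_pachash}; construction is sort-then-scan as in \cref{ss:construction}; and the query analysis splits into the expected number of loaded blocks (\cref{lem:blockLoads}) plus the constant expected internal work (\cref{lem:analysisQueryWork}). Your computation of $\mathds{E}(\ell_b)\le |x|+\BsMinus/a$ and your handling of the cluster scan are the same as the paper's.

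The one place where you diverge is precisely the step you flag as the main obstacle. You propose to show that $T_b\bmod\BsMinus$ is ``spread out enough'' via an equidistribution argument over the object sizes composing $T_b$. That argument does not go through in general: if every object size shares a common divisor $c>1$ with $\BsMinus$, then $T_b\bmod\BsMinus$ takes only $\BsMinus/c$ values and your spreading fails (the paper's footnote to \cref{lem:blockLoads} even records the sharper bound that results in that regime). The paper sidesteps this entirely with a small trick you are missing: it \emph{forces} the independence of bin boundaries and block boundaries by cyclically shifting the whole data stream by a uniformly random offset in $0..\BsMinus-1$ (letting the last bins wrap into the first block). With that random shift, conditioned on $\ell_b$ the offset $T_b\bmod\BsMinus$ is uniform, so the expected number of intersected blocks is exactly $1+(\mathds{E}(\ell_b)-1)/\BsMinus$, and the probability that $T_b$ lands on a boundary is exactly $1/\BsMinus$, which then cancels against the $-1/\BsMinus$ to give the clean $1+|x|/\BsMinus+1/a$. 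Adding that one device to your write-up makes the argument rigorous and identical to the paper's.
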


\subsection{Construction}\label{ss:construction}
Assuming that the set of input objects is stored in compressed form on external memory, we mainly need to sort the objects by their hash function value.
In our model, this has complexity $\sort(N)$.
In most practically relevant situations, this can even be done in $\Oh{\scan(N)}$ using integer sorting, see \cref{s:externalSorting} for details.

The sorted representation is then scanned and basically copied to the output, only adding $d$ bits of information within each block, which allow a query to initialize the scanning operation.
What $d$ is depends on the concrete encoding of the data, ranging from $d=0$ for objects of identical size or for 0-terminated strings to $d=\log(\Bs)$ bits when we explicitly encode the starting position of an object or bin.
Refer to \cref{ss:encoding} for examples.

\begin{lemma}
  \label{lem:ef_construction_pachash}
    When using Elias-Fano coding to store $p$, the index needs $2 + \log a + o(1)$ bits of internal memory per block and can be constructed in time $\Oh{m}$.
  \label{ss:analysisSpaceEf}
\end{lemma}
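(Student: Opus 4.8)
The plan is to read off the space bound directly from the Elias-Fano accounting of \cref{par:eliasFano} instantiated with the parameters of $p$, and then to observe that the encoding can be assembled in linear time because $p$ comes out of the construction scan already in sorted order.

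First I would pin down the shape of $p$: by the definition in \cref{sec:internalMemory}, $p=\seq{p_1,\ldots,p_m}$ is a sequence of $m$ integers in the range $1..am$, and it is monotonically non-decreasing (only non-decreasing, not strictly increasing, because one large bin can be the first bin intersecting several consecutive blocks; Elias-Fano coding handles repeated values without change). Hence, in the notation of \cref{par:eliasFano}, we have $k=m$ elements over a universe of size $U=am$, so $U/k=a$. Since $a$ is a power of two, $\log(U/k)=\log a$ is an integer: the $\log a$ least significant bits of each $p_i$ go into the array $L$ of $m\log a$ bits, and the upper parts $\lfloor p_i/a\rfloor\in 0..m$ are stored as one $1$-bit per element in a bit vector $H$ of $2m+1$ bits. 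Adding the $o(m)$ bits for constant-time $\textit{select}_0$ support \cite{clark1997compact} (which is needed anyway for query processing), the total is $m(2+\log a)+1+o(m)$ bits, i.e. $2+\log a+o(1)$ bits per block.

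For the construction time I would argue that $p$ is produced for free in increasing order of $i$: as described in \cref{ss:construction}, the sorted objects are scanned and copied into the $m$ external blocks, and at the moment block $i$ is opened the index $p_i$ of the first bin intersecting it is already known, so the scan emits $p_1,\ldots,p_m$ in order. Given this stream, building $L$ costs $\Oh{m}$ (one bit extraction per element), building $H$ costs $\Oh{m}$ (setting one bit at position $i+\lfloor p_i/a\rfloor$ per element, in a bit vector of length $\Oh{m}$), and equipping $H$ with rank/select support costs time linear in $|H|=\Oh{m}$ \cite{clark1997compact}; summing gives $\Oh{m}$.

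The argument is essentially routine, so there is no real obstacle; the two points that need a sentence of care are exactly the two flagged above — that $p$ is non-decreasing rather than strictly increasing (so one should remark that Elias-Fano tolerates ties and that a bin spanning several blocks is the only source of such ties), and that the claimed $\Oh{m}$ is the \emph{additional} internal work on top of the $\scan(N)$ already charged in \cref{ss:construction} for copying the objects to external memory, not a standalone bound.
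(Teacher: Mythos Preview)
Your proposal is correct and follows essentially the same approach as the paper's own proof: plug $k=m$ and $U=am$ into the Elias-Fano space formula from \cref{par:eliasFano}, add $o(m)$ for the $\textit{select}_0$ structure, and note that each $p_i$ is emitted in constant time during the construction scan, with the select structure built in $\Oh{m}$. Your additional remarks about ties in $p$, about $a$ being a power of two, and about the $\Oh{m}$ being on top of the $\scan(N)$ cost are sound clarifications that the paper leaves implicit.
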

\begin{proof}
$p$ consists of $k=m$ integers $\leq am=U$.
Inserting this into the space usage of Elias-Fano coded sequences (see \cref{par:eliasFano}) gives us
$\textrm{space}(p) = k(2 + \log(U/k)) + 1 = m(2 + \log(am/m)) + 1 = m(2+\log a) + 1$.
The $\textit{select}_0$ data structure on the upper bits $H$ can be stored in $o(m)$ bits \cite{clark1997compact}.
Each of the $m$ insertions into the sequence can be done in constant time while generating the external object representation.
The construction of the $\textit{select}_0$ data structure takes time $\Oh{m}$.
\end{proof}

\begin{lemma}\label{the:internal_space_requirements_succincter}
    When using Succincter \cite{Patrascu2008Succincter} to store $p$, the index needs $1.4427+\log(a+1)+o(1)$ bits of internal memory per block.
\end{lemma}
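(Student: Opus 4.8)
The plan is to instantiate the Succincter space bound recalled in \cref{par:eliasFano} on the concrete bit vector used to store $p$ in \cref{sec:bit_vector_with_succincter}, so the only real work is pinning down that bit vector's parameters. Recall that $p=\seq{p_1,\dots,p_m}$ is monotonically nondecreasing with $p_i\in 1..am$, and that $p_i$ is encoded as a single $1$-bit at position $i+p_i$. Since $i$ strictly increases, the $m$ positions $i+p_i$ are pairwise distinct, and they lie between $2$ and $m+am$; hence the bit vector can be taken to have universe size $u=m(a+1)$ and it contains exactly $k=m$ one-bits. (This encoding is essentially a bijection between admissible sequences $p$ and such sparse bit vectors -- from the $1$-positions $q_1<\dots<q_m$ one reads off $p_i=q_i-i$ and monotonicity of $p$ is automatic -- which is why $\log\binom{u}{k}$ is simultaneously the information-theoretic target for storing $p$.)

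Plugging $U=u=m(a+1)$ and $k=m$ into the Succincter space estimate of roughly $k(1.44+\log(U/k))+1$ bits from \cref{par:eliasFano} then gives $m\bigl(1.4427\ldots+\log(a+1)\bigr)+O(1)$ bits in total, \ie $1.4427+\log(a+1)+o(1)$ bits per block. Here the constant $1.4427\ldots=\log_2 e$ is exactly the slack in the estimate $\binom{u}{k}\le (eu/k)^{k}=(e(a+1))^{m}$; a sharper Stirling expansion of $\log\binom{m(a+1)}{m}$ would save a further $\Theta(m/a)$ term, so the stated figure is a mild over-estimate, which is all the lemma needs. I would also note that the representation remains query-efficient: as in \cref{sec:bit_vector_with_succincter}, a predecessor query for a bin $b$ reduces to $\textit{select}_0(b)-b$ together with a $\textit{select}_1$, both constant time under Succincter, so nothing is lost against \cref{lem:ef_construction_pachash}.

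The main obstacle is purely bookkeeping rather than a substantial step: one has to fix the universe as exactly $u=m(a+1)$ so that $\log(u/m)=\log(a+1)$ holds on the nose (and not $\log(a+1)+\Theta(1/a)$), and to check that the redundancy guaranteed by Succincter \cite{Patrascu2008Succincter} is $o(m)$ after dividing by the $m$ blocks -- which it is, since that redundancy is $o(u)=o(ma)$ and $a$ is a small constant. Once these are settled, the per-block bound follows immediately from the binomial estimate above.
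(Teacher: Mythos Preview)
Your proposal is correct and follows the same outline as the paper: encode $p$ as a length-$(a+1)m$ bit vector with $m$ ones via $p_i\mapsto i+p_i$, invoke Succincter's $\log\binom{(a+1)m}{m}+o(m)$ bound, and simplify. The only difference is in the last step: the paper bounds $\log\binom{(a+1)m}{m}$ via a Stirling estimate (their \cref{lem:bounds_binomial}) to obtain the intermediate expression $m\bigl(a\log\tfrac{a+1}{a}+\log(a+1)\bigr)$ before relaxing to $m(1.4427+\log(a+1))$, whereas you go straight through the cruder inequality $\binom{u}{k}\le(eu/k)^k$, which lands on the same constant $\log_2 e\approx 1.4427$ in one line. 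Your route is shorter and entirely sufficient for the lemma as stated; the paper's route additionally exposes the $\Theta(1/a)$ slack, which they use elsewhere to argue near-optimality.
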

\begin{proof}(Sketch, for full proof see \cref{pro:full_proof_internal_space_requirements})
  Using Succincter, i.e., \cite[Theorem 2]{Patrascu2008Succincter} with a length-\((a+1)m\) bit vector containing \(m\) ones, we can represent the internal memory index using only \(\log\binomial{(a+1)m}{m}+o(m)\leq m\left(1.4427+\log(a+1)\right)+o(m)\) bits, which results in the space mentioned above per external memory block.
\end{proof}

The lower bound for the space usage of a minimum $k$-perfect hash function for objects of identical size approaches $n \cdot (\log(e) + \log(k!/k^k)/k)$ \cite{BelazzouguiBD09}.
Using Stirling's approximation, we derive a new lower space bound that is easier to interpret.

\begin{align*}
    n \cdot &(\log(e) + \log(k!/k^k)/k) \\
        & \approx n \cdot \left(\log(e) + \log\left(\frac{\sqrt{2 \pi k} (k/e)^k}{k^k}\right)/k\right)
          = n \cdot \left(\log(e) + \log(\sqrt{2 \pi k} (1/e^k))/k\right) \\
        & = n \cdot \left(\log(e) + \frac{\log(\sqrt{2 \pi k})}{k} - \frac{\log(e^k)}{k}\right)
          = n \cdot \left(\log(e) + \frac{\log((2 \pi k)^{1/2})}{k} - \log(e)\right) \\
        &  = \frac{n}{k} \cdot \frac{1}{2}\log(2 \pi k)
\end{align*}

The value $n/k$ is the number of blocks, so M$k$PHFs need $\Omega(\log k)$ bits of space per block, while we show above that PaCHash needs a constant number.
In a way, PaCHash therefore breaks the theoretical lower space bounds of M$k$PHFs while keeping the same $\Oh{1}$ query time.
Choosing parameter $a$ large can bring the number of I/O operations arbitrarily close to optimal, independently of $k$.

\subsection{Query}\label{ss:analysisQuery}
We first show that a query loads a small expected number of blocks, depending only on the size of that specific object -- not the other objects in the data structure.
We then show that the exact blocks to be loaded can be determined upfront without any I/O operations, using constant time.

\begin{lemma}
\label{lem:blockLoads}
Retrieving an object $x$ of size $|x|$ from a PaCHash data structure loads $\leq 1+|x|/\BsMinus+1/a$ consecutive blocks from the external memory in expectation (setting $|x|=0$ if $x$ is not in the table).%
    \footnote{Using fewer estimates in the proof one can derive a bound of $1+\frac{|x|-c+1-e^{-\beta}}{\BsMinus}+\frac{1}{a}$ where $\beta=\frac{n\BsMinus}{Na}$ is the average number of objects per bin and $c$ is the greatest common divisor of $\BsMinus$ and all object sizes. In particular, for objects of identical size dividing $\Bs$, the bound is close to $1+1/a$.}
\end{lemma}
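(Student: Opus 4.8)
The plan is to reduce the number of loaded blocks to a question about how many block boundaries fall strictly inside the extent of the queried object's bin together with one extra block from the ``start in previous block'' correction. Fix the object $x$ and let $b=h(x)$ be its bin. From the pseudocode of $\textrm{locate}(x)$, the returned range $i..(j-1)$ consists of: the blocks that bin $b$ intersects, plus possibly one additional block to the left (when $p_i=b$, \ie\ when bin $b$ starts exactly at a block boundary and its predecessor bin is nonempty). So I would bound the count as $1 + (\text{number of block boundaries in the interior of bin }b\text{'s byte range}) + [\text{left-correction applies}]$, and then bound each term in expectation.

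The key step is to control the byte-extent of bin $b$. The bin consists of all objects hashing to $b$; its total size in bits is $S_b=\sum_{x': h(x')=b}|x'|$. Conditioned on $x$ being present (and hence on $x\in b$), $S_b=|x|+R_b$ where $R_b$ is the size of the \emph{other} objects landing in bin $b$; since the hash is uniform over $am$ bins and independent of the sizes, each other object lands in $b$ with probability $1/(am)$, so $\mathbb{E}[R_b]\le N/(am)$ (using $\sum |x'|\le N$). Now I would argue that a contiguous range of $S_b$ bits, placed at an arbitrary offset inside the $\BsMinus$-bit payload regions of the blocks, spans at most $1 + S_b/\BsMinus$ blocks in expectation over the starting offset — or more carefully, crosses at most $\lceil S_b/\BsMinus\rceil \le 1+S_b/\BsMinus$ block boundaries, hence touches at most $1+S_b/\BsMinus$ blocks. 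Taking expectations, the bin itself occupies at most $1+\mathbb{E}[S_b]/\BsMinus \le 1+(|x|+N/(am))/\BsMinus = 1+|x|/\BsMinus+1/(a\BsMinus)\cdot(N/m)$; and since $m=N/\BsMinus$ gives $N/m=\BsMinus$, this is exactly $1+|x|/\BsMinus+1/a$.

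The remaining piece is the left-correction: it is \emph{already absorbed} into the counting above, because that extra block is loaded precisely when bin $b$ starts at a block boundary, in which case bin $b$'s own extent does not ``reach into'' that left block, so I am double-charging if I add it separately — the cleaner accounting is to say the loaded range is contained in the set of blocks overlapping the half-open interval $[\text{start of bin }b-\!1\text{ in the quotiented sense},\ \text{end of bin }b)$, whose length is at most $S_{b-1}\cdot[\![b\text{ at boundary}]\!]+S_b$, but a sharper route (matching the stated bound exactly) is to observe that when the left-correction fires, bin $b$ starts at an offset $0$ within its first block, so $\lceil S_b/\BsMinus\rceil$ already counts one fewer interior boundary than the generic offset case, and the $-1$ exactly pays for the extra left block. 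For the unsuccessful case, $x\notin$ table, we set $|x|=0$ and the same argument applies with $S_b=R_b$ and $\mathbb{E}[R_b]\le N/(am)$, giving $1+1/a$.

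The main obstacle I anticipate is making the offset/boundary-crossing bookkeeping rigorous without losing the constant: one must be careful that the ``$+1$'' for the partially-filled first and last blocks is charged only once, and that the block-boundary correction in $\textrm{locate}$ is correctly netted against the rounding in $\lceil S_b/\BsMinus\rceil$. The footnote's sharper bound $1+\frac{|x|-c+1-e^{-\beta}}{\BsMinus}+\frac1a$ suggests the authors track this carefully using the gcd $c$ of $\BsMinus$ and all object sizes; for the stated (looser) bound one can afford to replace $\lceil S_b/\BsMinus\rceil$ by $1+S_b/\BsMinus$ and drop the $-e^{-\beta}$ saving coming from the possibility that bin $b-1$ is empty. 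Everything else — linearity of expectation, the independence of $h$ from object sizes, and the identity $N/m=\BsMinus$ — is routine.
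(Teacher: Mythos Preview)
Your overall plan matches the paper's: bound the expected size of bin $b_x$ by $|x|+\BsMinus/a$, convert this to an expected number of overlapped blocks, and account for the extra left block that $\textrm{locate}$ may load. The bin-size computation is identical to the paper's.

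The gap is in your treatment of the left-correction. The paper does \emph{not} absorb it into the overlap count. It computes the expected number of blocks the bin overlaps as $1+(\mathds{E}(|b_x|)-1)/\BsMinus\le 1+|x|/\BsMinus+1/a-1/\BsMinus$, the $-1$ coming from the exact expectation $1+(L-1)/\BsMinus$ for a length-$L$ range placed at a uniformly random offset (the paper justifies the uniform-offset assumption via a cyclic-shift trick stated in a footnote). Separately, the left-correction fires with probability $1/\BsMinus$ under the same independence assumption, contributing $+1/\BsMinus$ in expectation. The $-1/\BsMinus$ and $+1/\BsMinus$ cancel, giving the stated bound.

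Your pointwise ``absorbed'' argument does not work: when the correction fires (offset $0$), the bin itself touches $\lceil S_b/\BsMinus\rceil$ blocks, so with the extra left block you load $\lceil S_b/\BsMinus\rceil+1$, which can exceed your bound $1+S_b/\BsMinus$ (take $S_b=1$). By replacing $1+(L-1)/\BsMinus$ with $1+L/\BsMinus$ early, you spent exactly the $1/\BsMinus$ of slack that was meant to pay for the correction, leaving nothing to absorb it. The fix is simply to keep the $-1$ and add the correction separately, as the paper does; after that, the rest of your outline goes through.
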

\begin{proof}
  We first derive the expected number of blocks overlapped by the bin $b_x=h(x)$ that $x$ is stored in.
We then analyze the edge case that PaCHash sometimes loads one additional block unnecessarily even though it is not overlapped.

The expected size $\mathds{E}(|b_x|)$ of $b_x$ is the sum of $|x|$ and all other objects from the input set $\Objects$ that are mapped to it:

\begin{align*}
\mathds{E}(|b_x|)
    &= |x| + \sum_{y \in \Objects, y \not= x}|y| \mathds{P}(y \in b_x) \\
    &\leq |x| + \sum_{y \in \Objects}|y| \mathds{P}(y \in b_x)
    = |x| + \sum_{y \in \Objects}|y| \cdot \frac{1}{am}
    = |x| + \BsMinus m \cdot \frac{1}{am}
    = |x| + \frac{\BsMinus}{a}
\end{align*}

Let $X$ denote the number of blocks overlapped by bin $b_x$.
Assuming that the block boundaries and bin boundaries are
statistically independent,\footnote{We can guarantee the independence
  by cyclically shifting the data structure, i.e., we set the offset
  of the first block to a random number in $0..(\BsMinus-1)$
  and let the last bins wrap around into the first block.}
and using the linearity of the expected value, we get
$\mathds{E}(X) = 1 + (\mathds{E}(|b_x|)-1)/\BsMinus = 1 + |x|/\BsMinus + 1/a - 1/\BsMinus$.

At a position $i$, the sequence $p$ stores the first bin $b_i$ that intersects with block $i$.
Most of the time, this also means that $b_i$ extends into block $i-1$, which is why queries load that block as well.
When a bin starts \emph{exactly} at a block boundary, though, the previous block is not actually needed.
Because bin boundaries are statistically independent of block boundaries, the probability of that happening is $1/\BsMinus$.%
\footnote{
  When the preceding bin $b_{-1}$ is empty, PaCHash stores that empty bin in $p$, as described in \cref{s:structure}.
    This means that the probability of unnecessary block loads actually is smaller, namely $\frac{1}{\BsMinus}(1-\mathds{P}(|b_{-1}| > 0))$,
    where $\mathds{P}(|b_{-1}| > 0) = \left(1-\frac{1}{am}\right)^n\approx e^{-\frac{n}{am}}$ is the probability of $b_{-1}$ being empty.}

We get the result by putting together the expected blocks overlapped by a bin and the probability for loading one single block too much.
For negative queries, we are interested in the size of the bin that $x$ would be hashed to, so we can simply set $|x|=0$.
\end{proof}

\begin{lemma}
  \label{lem:analysisQueryWork}
  When using Elias-Fano coding for the index data structure of PaCHash, the
  range of blocks containing the bin of an
  object $x$ can be found in expected constant time.
\end{lemma}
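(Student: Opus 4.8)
\subparagraph*{Proof plan.}
The plan is to walk through the pseudocode of \cref{fig:locate} line by line and bound the work of each step. Computing $b=h(x)$, splitting $b$ into its $\log m$ most significant bits $u$ and its $\log a$ least significant bits $\ell$, decrementing $i$ in the ``$p_i=b$'' case, and forming the returned range are all plainly constant time. Each predecessor query on the Elias-Fano representation reduces, exactly as described in the paragraph on Elias-Fano coding, to one $\textit{select}_0$ query on the upper-bits vector $H$ -- answered in worst-case constant time via the $o(m)$-bit auxiliary structure of \cite{clark1997compact} -- followed by a scan of the \emph{cluster} of entries of $L$ sharing the high part $u$, together with a few constant-time reads from $L$ per scanned entry. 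So the whole statement comes down to bounding the expected length of the cluster(s) that are scanned.

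Fix the queried bin $b=h(x)$ and the corresponding high part $u$; the cluster consists of those indices $i$ with $p_i\in\{au,\ldots,au+a-1\}$, i.e.\ the blocks whose first intersecting bin is one of the $a$ consecutive bins sharing the high part $u$. I would bound its size geometrically. Using the cyclic-shift assumption already used in the proof of \cref{lem:blockLoads}, the starting positions of the $m$ blocks are equally spaced by $\BsMinus$ payload bits and independent of the bin boundaries, so the number of blocks whose first intersecting bin lies among those $a$ bins is at most $1+(\text{total size of bins }au,\ldots,au+a-1)/\BsMinus$; the empty-bin convention of \cref{s:structure} only perturbs this by an additive constant, since empty bins occupy no payload. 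Now take expectations, reusing the computation from \cref{lem:blockLoads}. For an unsuccessful query, $\mathds{E}(\text{total size of the }a\text{ bins})\le \frac{a}{am}\sum_{y\in\Objects}|y|=\BsMinus$, giving expected cluster size $\le 2$. For a successful query, bin $b$ is one of the $a$ bins, so conditioning on $h(x)=b$ yields $\mathds{E}(\text{total size of the }a\text{ bins}\mid h(x)=b)\le |x|+\frac{a}{am}\sum_{y\neq x}|y|\le |x|+\BsMinus$, hence expected cluster size $\le 2+|x|/\BsMinus$. The further work to locate $j$ and to handle the ``$p_i=b$'' case either continues the same scan (past the at most $1+|x|/\BsMinus$ indices with $p_\cdot=b$) or amounts to one more $\textit{select}_0$ query plus a scan of cluster $u$ or $u{+}1$, for which the identical bound applies.

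Putting this together, the internal work of $\mathrm{locate}(x)$ is $O(1)$ $\textit{select}$ queries plus an expected-$O(1+|x|/\BsMinus)$ scan of $L$; since the $O(|x|/\BsMinus)$ summand is already subsumed by the block-scanning cost charged to a query in \cref{thm:main} (and is absent for unsuccessful searches), the predecessor machinery itself contributes only expected constant time. The one step that needs genuine care is the reduction from cluster length to the total size of the $a$ bins sharing a high part: it rests on the block offsets being statistically independent of bin boundaries, which is precisely why the cyclic-shift device of \cref{lem:blockLoads} must be invoked here as well -- without it an adversarial input could align many block boundaries inside a single short stretch of bins and make one cluster long in the worst case.
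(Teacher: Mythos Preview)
Your argument tracks the paper's proof closely: a constant-time $\textit{select}_0$ on $H$ locates the cluster of entries sharing the high part $u$, and the scan length is bounded via the expected total size of the bins with that high part, which is at most $\BsMinus$. The paper is slightly tighter in that it bounds only the \emph{prefix} of the cluster actually traversed (entries with value $<b$), obtaining $\mathds{E}(Y_u)/\BsMinus\le 1$ directly without splitting into successful and unsuccessful cases; your bound on the whole cluster instead picks up the extra $|x|/\BsMinus$ from the queried bin itself. Your explicit invocation of the cyclic-shift device is harmless but not needed for the expectation argument the paper uses here.

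There is one place where you stop just short of the lemma as literally stated. You conclude with expected $O(1+|x|/\BsMinus)$ work and argue that the $|x|/\BsMinus$ summand is ``subsumed'' by the block-scanning cost in \cref{thm:main}. That suffices for the theorem, but the lemma asserts expected \emph{constant} time, independent of $|x|$. The paper makes the same observation about the forward scan costing $O(1+|x|/\BsMinus)$ and then adds a trick you do not spell out: to locate $j$, perform a second $\textit{select}_0$ to jump to the \emph{end} of cluster $u$ and scan \emph{backward}. The entries skipped in that direction are those with value $>b$ in the cluster, and by the symmetric calculation their expected number is again at most $1$. This avoids ever traversing the run of indices with $p_i=b$, which is exactly where the $|x|/\BsMinus$ term arises, and delivers the claimed expected constant time. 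With that tweak your proof is complete and essentially identical to the paper's.
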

\begin{proof}
A query for an object $x$ consists of four steps.
First, we hash $x$ to get the corresponding bin $b_x=au+\ell$, where $a$ is the tuning parameter of PaCHash.
We then execute a constant time \cite{clark1997compact} $\textit{select}_0$ query on the upper bits $H$.
That gives us the start of a cluster of entries in the sequence that all have the same $\log(m)$ most significant bits $u$.
We need to iterate over the cluster entries which are $<b_x$ until we find the predecessor.
Each cluster entry corresponds to a stored bin index.
Let us bound the expected size $\mathds{E}(Y_u)$ of all bins that have most significant bits $u$ and are $<b_x$.

\begin{align*}
    \mathds{E}(Y_u)
    &= \sum_{y\in\Objects}|y| \cdot \mathds{P}(h(y) \textrm{ has MSB}=u\textrm{; }h(y)<h(x)) \\
    &\leq \sum_{y\in\Objects}|y| \cdot \mathds{P}(h(y) \textrm{ has MSB}=u)
    =\frac{1}{m}\sum_{y\in\Objects}|y|=\frac{m\BsMinus}{m}=\BsMinus
\end{align*}

The expected number of cluster entries we need to scan is therefore  $\mathds{E}(Y_u)/\BsMinus=1$.
The practical implementation then further scans the cluster to find the last block overlapping $b_x$.
This takes non-constant time $\Oh{1+|x|/\BsMinus}$, which is not a problem since a proportional number of blocks are loaded anyway.
However, we strengthen the lemma by observing that we can also use another $\textit{select}_0$ query followed by a backward scan of the cluster.
\end{proof}

\subsection{Details on External Sorting}
\label{s:externalSorting}
We now show that the external sorting needed during construction of a PaCHash data structure can be done in scanning complexity using very modest additional assumptions. 
First note that the problem of sorting objects during construction is easy when the average object size exceeds the block size, i.e., $N/n>\Bs$ and thus $n<N/\Bs$.
In that case, a variant of bucket sort that maps the keys to $\Oh{n}$ buckets runs with linear internal expected work and $\Oh{n+N/\Bs}=\Oh{N/\Bs}$ I/Os \cite[Theorem~5.9]{SMDD19short}.

On the other hand, the average object size $N/n$ must be at least $\log n$ since we are looking at objects with unique keys.
For the remaining case $\log n\leq N/n\leq \Bs$, we additionally make a \emph{tall cache assumption} quite usual for external memory \cite{Frigo:1999:COA} where $\Ms>\Bs^2$. Since the index data structure has at least $N/\Bs$ bits, we also know that
$\Ms\geq N/\Bs$.
A single scan of the input can partition it into pieces of size about
\(\frac{N}{\Ms/\Bs}\leq \frac{N}{(N/\Bs)/\Bs}=\Bs^2\leq \Ms\)
which fit into internal memory. Moreover, since the average object size is $\geq \log n$, we can afford
to replace the objects in an internally sorted fragment of the input
by key-pointer pairs which once more allows us to use bucket sort --
this time running in internal memory.

\section{Variants and Refinements}\label{s:variants}
Up until now, PaCHash was described as a static, external hash table for objects of variable size.
The following section describes variants of the scheme.

\myparagraph{Object Encoding.}
\label{ss:encoding}
Instead of storing objects contiguously with a self-delimiting encoding, PaCHash allows for a wide range of other options, as shown in \Cref{tab:d}.
In general, we have a trade-off between the space needed to decode the objects in a block and the strength of assumptions made on object representation.
For example, explicitly storing the offsets of objects in blocks removes the restriction to a self-delimiting encoding, without increasing the size of the internal data structure.
Another important case are objects of identical size
where we can calculate the block offset at query time and therefore need no external space overhead.
When the object size divides the block size, it can be shown that the expected number of I/O operations is close to $1+1/a$.

\begin{table}
    \caption{External space overhead of $d$ bits per block in order to facilitate scanning that block. The term $+1$ when $d\not=0$ is needed for the case that no object starts in a block.}
    \label{tab:d}
    \centering
  \begin{tabularx}{\columnwidth}{l X}
    \toprule
    $d$ & Case Description\\ \midrule
    0 & Identical object sizes, zero terminated strings and analogous cases\\
    $\lceil\log(w+1)\rceil$ & Objects that use variable bit-length encoding with $\leq w \leq \Bs$ bits\\
    $\lceil\log(W/w+1)\rceil$ & Objects of size divisible by $w$ with $W=\min(\Bs, \max\text{ object size})$\\
    $\lceil\log(\Bs)\rceil$ & Explicit storage of a starting position of a bin\\
    \bottomrule
  \end{tabularx}
\end{table}

\myparagraph{Memory Locations.}
PaCHash can be stored fully externally.
By doing so, the number of I/Os for a query is increased by three (two I/Os to query the rank and select data structure on the bit vector of the Elias-Fano coding and one I/O to get the remaining bits).
The number of I/Os can be reduced by interleaving the arrays of the Elias-Fano coding.
PaCHash is also interesting as a purely internal data structure since it allows for configurations that need less space than any previous approach, even for objects of identical size.
A variant that simplifies the external memory representation is to store the $d$ bits of offsets per block in an internal memory data structure, possibly interleaved with the Elias-Fano representation.
A variant enabling faster scanning of blocks separates keys and values \cite{lu2017wisckey}, for example by storing $\log\Bs$ bits of offset for each object.

\myparagraph{Functional Enhancements.}
Because PaCHash sorts objects by their hashed key, \emph{range queries} with respect to the original keys are not immediately possible.
Litwin and Lomet \cite{litwin1986bounded} implement range queries for hash tables by partitioning the key space into smaller pieces.
An index tree then leads to a number of small (PaCHash) tables that are fully scanned.
Order-preserving hash functions \cite{garg1986order} are another alternative.
PaCHash can be made \emph{dynamic} using standard techniques like a Log-Structured Merge Tree \cite{o1996log,luo2020lsm}.
Merging multiple PaCHash data structures is possible efficiently.
The idea is to construct the hash function $h$ by first hashing to a larger range and then mapping it linearly to the range $am$.
When updating $h$ to the new total number of blocks, the objects of both input data structures are already sorted and can be merged with a linear sweep.

\myparagraph{PaCHash as Variable-Bit-Length Array.}
Since one of PaCHash's key features is to store objects of variable size efficiently, it can also be used as variable-bit-length array.
To this end, we simply use the array index as hash function if we also store the number of previously stored objects.
However, we then have to assume that objects stored in the PaCHash VLA are self-delimiting, as this allows us to identify the objects within a block.
Note that this assumption is satisfied in a lot of applications VLAs are used in, e.g., when storing variable length codes like Elias-\(\gamma\) and -\(\delta\) codes~\cite{Elias74} or Golomb codes~\cite{Golomb1966Encodings}.
Alternatively, in external memory, we can lift the restriction to self-delimiting objects by storing offsets as described above.
The number of previously stored objects is necessary to identify the element within the block, and requires at most \(\ceil{\log n}\) bits per external memory block.

\definecolor{colorPaCHash}{HTML}{377EB8}
\definecolor{colorSeparator}{HTML}{000000}
\definecolor{colorPthash}{HTML}{000000}
\definecolor{colorLevelDb}{HTML}{FF7F00}
\definecolor{colorSilt}{HTML}{4DAF4A}
\definecolor{colorRocksDb}{HTML}{984EA3}
\definecolor{colorRecSplit}{HTML}{A65628}
\definecolor{colorUnorderedMap}{HTML}{F781BF}
\definecolor{colorCuckoo}{HTML}{E41A1C}
\definecolor{colorChd}{HTML}{444444}

\pgfplotscreateplotcyclelist{mycolorlist}{%
  {colorPaCHash, mark=diamond},
  {colorLevelDb, mark=square},
  {colorSilt, mark=o},
  {colorSeparator, mark=triangle},
  {colorRocksDb, mark=pentagon}
}
\pgfdeclareplotmark{pacman}{%
  \pgfpathmoveto{\pgfpointorigin}%
  \pgfpathlineto{\pgfqpointpolar{40}{1.2\pgfplotmarksize}}%
  \pgfpatharc{40}{320}{1.2\pgfplotmarksize}%
  \pgfpathlineto{\pgfpointorigin}%
  \pgfusepath{fill}
}
\pgfdeclareplotmark{flippedTriangle}{%
  \pgfpathmoveto{\pgfqpointpolar{-90}{1.2\pgfplotmarksize}}%
  \pgfpathlineto{\pgfqpointpolar{30}{1.2\pgfplotmarksize}}%
  \pgfpathlineto{\pgfqpointpolar{150}{1.2\pgfplotmarksize}}%
  \pgfpathclose%
  \pgfusepath{stroke}
}

\begin{figure*}[t]
            \centering
        \begin{tikzpicture}
            \begin{axis}[
              title={},
              plotIoVolume,
                xlabel={Average object size},
                ylabel={\begin{tabular}{c}\textbf{I/O Volume}\\average [B/Query]\end{tabular}},
                ytick distance=1024,
                xtick distance=128,
                scaled ticks=false,
                cycle list name=mycolorlist,
            ]
            \addplot+[mark=none,densely dotted] coordinates { (64,6208.0) (960,7104.0) };
            \addlegendentry{$a$=2 theory};
            \addplot+[mark=none,densely dotted] coordinates { (64,5184.0) (960,6080.0) };
            \addlegendentry{$a$=4 theory};
            \addplot+[mark=none,densely dotted] coordinates { (64,4672.0) (960,5568.0) };
            \addlegendentry{$a$=8 theory};
            \addplot+[mark=none,densely dotted] coordinates { (64,4416.0) (960,5312.0) };
            \addlegendentry{$a$=16 theory};
            \addplot+[mark=none,densely dotted] coordinates { (64,4288.0) (960,5184.0) };
            \addlegendentry{$a$=32 theory};

            \addplot+[only marks] coordinates { (64,6217.65) (128,6280.56) (192,6344.54) (256,6413.72) (320,6478.19) (384,6541.11) (448,6606.11) (512,6663.78) (576,6728.29) (640,6795.92) (704,6857.24) (768,6920.68) (832,6987.776) (896,7047.54) (960,7110.45) };
            \addlegendentry{$a$=2 real};
            \addplot+[only marks] coordinates { (64,5195.37) (128,5258.28) (192,5320.66) (256,5384.11) (320,5450.67) (384,5513.58) (448,5578.59) (512,5644.66) (576,5705.97) (640,5769.42) (704,5830.25) (768,5895.78) (832,5958.656) (896,6024.192) (960,6086.08) };
            \addlegendentry{$a$=4 real};
            \addplot+[only marks] coordinates { (64,4683.16) (128,4742.92) (192,4810.55) (256,4876.08) (320,4937.93) (384,4999.29) (448,5062.74) (512,5128.77) (576,5192.21) (640,5249.35) (704,5316.44) (768,5386.73) (832,5445.96) (896,5510.96) (960,5572.28) };
            \addlegendentry{$a$=8 real};
            \addplot+[only marks] coordinates { (64,4425.24) (128,4489.18) (192,4548.94) (256,4613.45) (320,4676.85) (384,4739.28) (448,4802.68) (512,4868.22) (576,4929.58) (640,4996.14) (704,5057.99) (768,5123.52) (832,5186.97) (896,5256.192) (960,5316.44) };
            \addlegendentry{$a$=16 real};
            \addplot+[only marks] coordinates { (64,4295.72) (128,4358.64) (192,4419.99) (256,4483.93) (320,4547.91) (384,4612.38) (448,4678.98) (512,4738.21) (576,4802.15) (640,4865.06) (704,4932.2) (768,4993.52) (832,5056.43) (896,5117.79) (960,5184.35) };
            \addlegendentry{$a$=32 real};

            \legend{};
            \end{axis}
        \end{tikzpicture}
        \hfill
        \begin{tikzpicture}
            \begin{axis}[
              title={},
              plotIoVolume,
                xlabel={Average object size},
                ylabel={\begin{tabular}{c}\textbf{Query Time}\\direct I/O [$\mu$s/Query]\end{tabular}},
                scaled ticks=false,
                xtick distance=128,
                legend columns=5,
                legend to name=bytesFetchedLegend,
                cycle list name=mycolorlist,
            ]
            \addplot coordinates { (64,1.92733) (128,1.93111) (192,1.95664) (256,1.98505) (320,2.00797) (384,2.03605) (448,2.0551) (512,2.07464) (576,2.09549) (640,2.12093) (704,2.14085) (768,2.15707) (832,2.17782) (896,2.19266) (960,2.21397) };
            \addlegendentry{$a$=2};
            \addplot coordinates { (64,1.52544) (128,1.53117) (192,1.55553) (256,1.57366) (320,1.5992) (384,1.62663) (448,1.65243) (512,1.67848) (576,1.70195) (640,1.72942) (704,1.75093) (768,1.77753) (832,1.80446) (896,1.82539) (960,1.8529) };
            \addlegendentry{$a$=4};
            \addplot coordinates { (64,1.44709) (128,1.41284) (192,1.40966) (256,1.41658) (320,1.42236) (384,1.44536) (448,1.46185) (512,1.48102) (576,1.5011) (640,1.52612) (704,1.54811) (768,1.57611) (832,1.60159) (896,1.66289) (960,1.65146) };
            \addlegendentry{$a$=8};
            \addplot coordinates { (64,1.41947) (128,1.37727) (192,1.38464) (256,1.37738) (320,1.39462) (384,1.3877) (448,1.39832) (512,1.40442) (576,1.42248) (640,1.43931) (704,1.45523) (768,1.47848) (832,1.50064) (896,1.52706) (960,1.55496) };
            \addlegendentry{$a$=16};
            \addplot coordinates { (64,1.40436) (128,1.36983) (192,1.36934) (256,1.36936) (320,1.36894) (384,1.38835) (448,1.38372) (512,1.40011) (576,1.40596) (640,1.41542) (704,1.42508) (768,1.44638) (832,1.45731) (896,1.47854) (960,1.50287) };
            \addlegendentry{$a$=32};

          \end{axis}
          \end{tikzpicture}

          \begin{tikzpicture}
            \ref*{bytesFetchedLegend}
          \end{tikzpicture}
    \caption{Dependence of I/O volume and query time on the average object size $s$. Sizes are normal distributed with variance $s/5$, rounded to the next positive integer. Dotted lines show theoretic I/O volumes, while marks show measurements. Note that the measurements closely match the analysis. Using other distributions and plotting over the returned objects' sizes gives equivalent results.}
    \label{fig:bytesFetched}
\end{figure*}
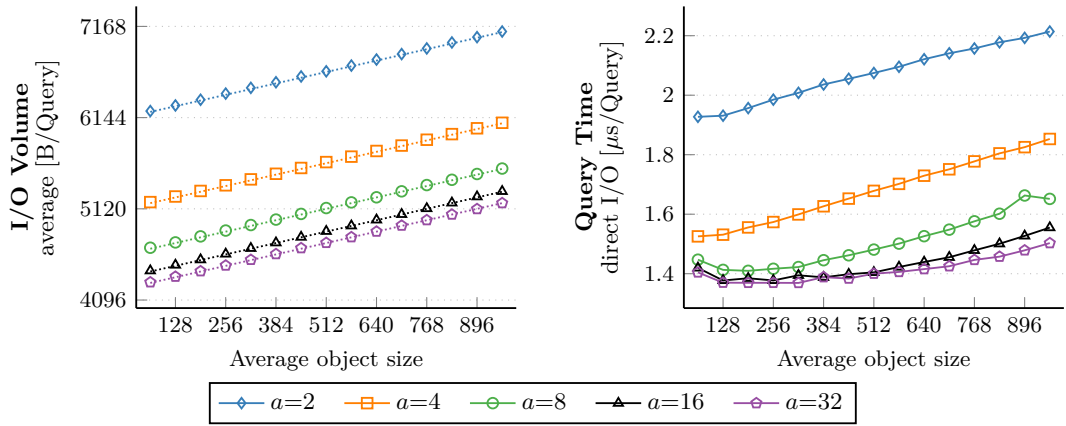

\section{Experiments}\label{s:experiments}
The code and scripts needed to reproduce our experiments are available on GitHub under the General Public License:
\url{https://github.com/ByteHamster/PaCHash}.
The code for the comparison with competitors (including our competitors' code with some patches) is available on GitHub as well:
\url{https://github.com/ByteHamster/PaCHash-Experiments}.
The latter repository also contains a Docker image that can build and run a simplified version of the experiments from \cref{fig:bytesFetched,fig:competitorsIdentical,fig:competitorsVariable} in about 30 minutes.

\myparagraph{Experimental Setup.}
We run our experiments on an Intel i7 11700 processor with 8 cores and a base clock speed of 2.5 GHz.
We use a Samsung 980 Pro NVMe SSD with a capacity of 1 TB.
The machine runs Ubuntu 21.10 with Linux 5.13.0.
We use the GNU C++ compiler version 11.2.0 with optimization flags \texttt{-O3 -march=native}.
Externally, each block of size $\Bs=2^{15}$ bits (4096 bytes) stores a table of 8 byte keys and 2 byte object offsets.
During construction, we sort pointers to the objects using IPS\textsuperscript{2}Ra \cite{axtmann2020engineering}.
Unless otherwise specified, the index is an Elias-Fano coded sequence based on sdsl's \cite{gog2014theory} arrays of flexible bit width and the select data structures by Kurpicz \cite{kurpicz2022pasta}.
For the I/O operations, we use \texttt{io\_uring}.
Query operations keep a queue of 128 asynchronous requests in flight.

\myparagraph{Competitors.}
To our knowledge, there is no existing implementation of a hash table for variable size objects that is simultaneously aimed at low internal memory usage and few I/O operations.
As the main competitors, we choose LevelDB \cite{google2021leveldb}, RocksDB \cite{facebook2021rocksdb}, and SILT \cite{lim2011silt}.
To abstract from the different implementations of I/O operations, we also extract the internal memory index (address calculation) from some competitors.
Additionally, we compare PaCHash to std::unordered\_map, as well as the perfect hash functions RecSplit \cite{esposito2020recsplit}, CHD \cite{BelazzouguiBD09,cmph}, and PTHash \cite{pibiri2021pthash}.%
Despite std::unordered\_map not being tuned for efficiency, it is a widely available, general purpose hash table that can be seen as baseline for the simple idea of explicitly storing pointers instead of building a compressed index data structure.\footnote{In this setting, general purpose internal memory hash tables do not work well, as they introduce an overhead of at least \(\log m\) bits per element to store the positions, and they also have to store the length of the element.}

\label{sec:variableSizeAdaption}
We also implement Separator Hashing \cite{gonnet1988external, larson1984file} and Cuckoo Hashing \cite{azar1994balanced,pagh2003basic}.
In contrast to the original papers, our implementations can be used with objects of variable size $\leq \Bs$ when setting the load factor low enough.
Note that decreasing the load factor increases the number of blocks and therefore the space needed for indexing.
The construction of PaCHash always succeeds, while it can fail for Separator and Cuckoo Hashing depending on the preselected load factor or tuning parameter.
Refer to \cref{fig:separatorLoadFactor} for details.

\begin{table}[t]
  \centering
  \caption{Average internal space usage and average query time for different values of parameter \(a\) and normal distributed object sizes.
    For more information on the query time, which is influenced by the object size, see \cref{fig:bytesFetched}.
    Note that the internal space usage does not depend on the object size.}
  \label{tab:sizeQueryTimeTradeoff}
  \begin{tabular}{lcc}
    \toprule
    \(a\) & \makecell{avg. internal space\\ {[B/block]}} & \makecell{avg. query time\\ {[\(\mu s\)/query]}}\\
    \midrule
     2 & 3.01 & 2.07 \\
     4 & 4.01 & 1.68 \\
     8 & 5.01 & 1.50 \\
    16 & 6.01 & 1.43 \\
    32 & 7.01 & 1.41 \\
    \bottomrule
  \end{tabular}
\end{table}

\begin{figure*}[t]
    \centering
  \begin{subfigure}[b]{.60\textwidth}
    \begin{tabular}{ l r r r }
      \toprule
      &          Twitter &        UniRef 50 &        Wikipedia \\ \midrule
      Objects $n$    &       20 238 968 &       48 531 431 &       16 181 427 \\
      Average size   &            115 B &            281 B &           1731 B \\
      Median size    &             94 B &            194 B &             77 B \\
      Maximum size   &            560 B &            45 KB &           272 KB \\
      Total size $N$ &           2.4 GB &          13.2 GB &          26.3 GB \\
      Objects $>\Bs$ &              0\% &           0.08\% &             12\% \\
      \bottomrule
    \end{tabular}
    \caption{Twitter, UniRef, and Wikipedia real world data sets we use for benchmarks. The median of $77$ bytes of the Wikipedia data set is caused by pages that are redirects.}
    \label{tab:dataSets}
  \end{subfigure}
  \hfill
  \begin{subfigure}[b]{.39\textwidth}
    \centering
    \rotatebox[origin=c]{90}{\small\hspace{9mm}Relative occurrences}
    \parbox{0.8\columnwidth}{
    \begin{tikzpicture}
      \begin{axis}[
        plotSmallHist,
        /pgfplots/every axis plot post/.append style={mark=none},
        yticklabels={,,},
        xmax=440,
        scaled ticks=false,
        title style={at={(1,.5)},anchor=east},
        title={\footnotesize Twitter},
        ]
        \addplot[color=colorPaCHash] coordinates { (4.5,170760) (14.5,711075) (24.5,1278400) (34.5,1487837) (44.5,1488007) (54.5,1296623) (64.5,1216257) (74.5,1103436) (84.5,990193) (94.5,864417) (104.5,773366) (114.5,671412) (124.5,595222) (134.5,590511) (144.5,2680278) (154.5,751562) (164.5,319671) (174.5,206674) (184.5,170264) (194.5,142955) (204.5,131935) (214.5,150733) (224.5,152476) (234.5,212040) (244.5,212915) (254.5,111576) (264.5,101624) (274.5,103521) (284.5,104875) (294.5,115467) (304.5,133020) (314.5,145751) (324.5,145753) (334.5,139555) (344.5,125272) (354.5,122947) (364.5,144979) (374.5,190721) (384.5,147007) (394.5,25618) (404.5,4652) (414.5,4719) (424.5,1659) (434.5,251) (444.5,188) (454.5,132) (464.5,100) (474.5,44) (484.5,122) (494.5,98) };
      \end{axis}
    \end{tikzpicture}
    \begin{tikzpicture}
      \begin{axis}[
        plotSmallHist,
        /pgfplots/every axis plot post/.append style={mark=none},
        yticklabels={,,},
        xmax=4400,
        scaled ticks=false,
        title style={at={(1,.5)},anchor=east},
        title={\footnotesize UniRef 50},
        ]
        \addplot[color=colorPaCHash] coordinates { (13.0,30692) (20.0,109334) (30.0,384287) (42.0,1021677) (56.0,2356726) (72.0,3422884) (90.0,3369252) (110.0,3627489) (132.0,3551402) (156.0,3421956) (182.0,3160322) (210.0,2958077) (240.0,2751979) (272.0,2538286) (306.0,2341860) (342.0,2088225) (380.0,1828191) (420.0,1575247) (462.0,1298834) (506.0,1077691) (552.0,890734) (600.0,743504) (650.0,622248) (702.0,522978) (756.0,441591) (812.0,375591) (870.0,307868) (930.0,260719) (992.0,217704) (1056.0,189223) (1122.0,158241) (1190.0,130457) (1260.0,106350) (1332.0,89589) (1406.0,75621) (1482.0,64244) (1560.0,53685) (1640.0,46277) (1722.0,39507) (1806.0,34069) (1892.0,29026) (1980.0,25354) (2070.0,22340) (2162.0,19699) (2256.0,16290) (2352.0,14410) (2450.0,12542) (2550.0,11316) (2652.0,9655) (2756.0,8364) (2862.0,7682) (2970.0,6618) (3080.0,6024) (3192.0,5425) (3306.0,4628) (3422.0,4231) (3540.0,3739) (3660.0,3387) (3782.0,2901) (3906.0,2672) (4032.0,2554) (4160.0,2276) (4290.0,2048) (4422.0,1821) (4494.0,160) };
      \end{axis}
    \end{tikzpicture}
    \begin{tikzpicture}
      \begin{axis}[
        plotSmallHist,
        xlabel={Object size},
        /pgfplots/every axis plot post/.append style={mark=none},
        yticklabels={,,},
        xmax=44000,
        scaled ticks=false,
        title style={at={(1,.5)}, anchor=east},
        title={\footnotesize Wikipedia},
        ]
        \addplot[color=colorPaCHash] coordinates { (11.0,0) (12.0,0) (13.5,0) (15.0,25) (16.5,205) (19.0,58561) (21.5,145440) (24.0,410300) (27.0,650406) (30.0,818813) (33.5,971115) (38.0,913868) (43.0,723871) (48.0,652354) (54.0,818020) (60.5,671108) (67.5,818757) (76.0,659992) (85.5,414486) (96.0,269741) (108.0,211479) (121.0,137548) (135.5,125755) (152.5,103946) (171.5,77331) (192.5,52406) (216.0,46186) (242.0,52590) (271.5,51262) (305.0,35027) (342.5,35098) (384.5,34672) (431.5,41218) (484.0,42609) (543.0,54607) (610.0,73974) (685.0,92077) (768.5,107532) (862.5,143895) (968.0,186238) (1086.5,233355) (1220.0,279754) (1369.5,272249) (1537.0,286527) (1725.0,297619) (1936.0,300484) (2173.0,309063) (2439.5,312079) (2738.5,307600) (3073.5,300709) (3450.0,285699) (3872.5,266904) (4346.5,250537) (4879.0,226252) (5476.5,205618) (6147.0,182245) (6900.0,160761) (7745.0,137539) (8693.5,117406) (9758.5,100605) (10953.5,85157) (12294.5,71780) (13800.0,59715) (15490.0,50358) (17387.0,42402) (19516.5,35317) (21906.5,28962) (24589.0,23776) (27600.0,19166) (30980.0,15101) (34774.0,11551) (39033.0,10244) (43142.5,6660) };
      \end{axis}
    \end{tikzpicture}}
  \caption{Relative occurrences of object sizes in the real world data sets described in \cref{tab:dataSets}.}
  \label{fig:dataSetsHist}
  \end{subfigure}\\

  \begin{subfigure}[b]{\textwidth}
    \begin{tikzpicture}
      \begin{axis}[
        plotPaCHashComparison,
        title={},
        xlabel={Parameter $a$},
        ylabel={\begin{tabular}{c}\textbf{Space}\\internal [B/Block]\end{tabular}},
        xmode=log,
        log ticks with fixed point,
        log basis x={2},
        ]
        \addplot[color=colorPaCHash,mark=o] coordinates { (1,1.03263) (2,1.6581) (4,2.48241) (8,3.60626) (16,4.80022) (32,6.32515) (64,7.99564) (128,10.584) };
        \addlegendentry{Entropy coded, Twitter};
        \addplot[color=colorPaCHash,mark=star] coordinates { (1,1.32286) (2,2.06929) (4,3.04209) (8,4.1003) (16,5.38882) (32,6.74366) (64,8.47675) (128,10.764) };
        \addlegendentry{Entropy coded, UniRef};
        \addplot[color=colorPaCHash,mark=flippedTriangle] coordinates { (1,2.02006) (2,2.70954) (4,3.44051) (8,4.19718) (16,5.06612) (32,6.3088) (64,7.76899) (128,9.98633) };
        \addlegendentry{Entropy coded, Wikipedia};
        \addplot[color=colorLevelDb,mark=square] coordinates { (1,2.00933) (2,3.00939) (4,4.00935) (8,5.00941) (16,6.00936) (32,7.00942) (64,8.00937) (128,9.00943) };
        \addlegendentry{Elias-Fano, Twitter};

        \addplot[color=colorSeparator,mark=+] coordinates { (1,2.0) (2,2.75489) (4,3.60964) (8,4.52933) (16,5.48687) (32,6.46501) (64,7.45391) (128,8.44832) };
        \addlegendentry{Succincter};

        \legend{};
      \end{axis}
    \end{tikzpicture}
    \hfill
    \begin{tikzpicture}
      \begin{axis}[
        plotPaCHashComparison,
        xlabel={Parameter $a$},
        ylabel={\begin{tabular}{c}\textbf{Query Throughput}\\direct I/O [kQueries/s]\end{tabular}},
        xmode=log,
        log ticks with fixed point,
        log basis x={2},
        ]
        \addplot[color=colorPaCHash,mark=o] coordinates { (1,61.1479) (2,82.6537) (4,88.0008) (8,111.825) (16,164.736) (32,211.795) (64,293.69) (128,372.189) };
        \addlegendentry{Entropy coded, Twitter};
        \addplot[color=colorPaCHash,mark=star] coordinates { (1,56.2068) (2,54.831) (4,80.3118) (8,90.0553) (16,113.96) (32,185.515) (64,255.961) (128,342.37) };
        \addlegendentry{Entropy coded, UniRef};
        \addplot[color=colorPaCHash,mark=flippedTriangle] coordinates { (1,73.0805) (2,48.9809) (4,50.8569) (8,60.232) (16,104.945) (32,147.159) (64,224.837) (128,302.212) };
        \addlegendentry{Entropy coded, Wikipedia};
        \addplot[color=colorLevelDb,mark=square] coordinates { (1,263.135) (2,377.074) (4,483.637) (8,563.804) (16,615.638) (32,645.162) (64,661.376) (128,670.092) };
        \addlegendentry{Elias-Fano, Twitter};
        \addplot[color=colorLevelDb,mark=triangle] coordinates { (1,269.372) (2,371.425) (4,468.823) (8,541.81) (16,588.698) (32,615.89) (64,630.782) (128,638.164) };
        \addlegendentry{Elias-Fano, UniRef};
        \addplot[color=colorLevelDb,mark=diamond] coordinates { (1,303.552) (2,370.188) (4,416.667) (8,445.368) (16,463.034) (32,471.921) (64,476.948) (128,479.386) };
        \addlegendentry{Elias-Fano, Wikipedia};

        \legend{};
      \end{axis}
    \end{tikzpicture}

    \begin{tikzpicture}
      \begin{axis}[
        height=0.5cm,
        width=5cm,
        hide axis,
        xmin=10,
        xmax=50,
        ymin=0,
        ymax=0.4,
        legend style={font=\small},
        legend columns=3,
        ]
        \addplot[color=colorPaCHash,mark=o] coordinates { (0,0) };
        \addlegendentry{Entropy coded, Twitter};
        \addplot[color=colorPaCHash,mark=star] coordinates { (0,0) };
        \addlegendentry{Entropy coded, UniRef};
        \addplot[color=colorPaCHash,mark=flippedTriangle] coordinates { (0,0) };
        \addlegendentry{Entropy coded, Wikipedia};
        \addplot[color=colorLevelDb,mark=square] coordinates { (0,0) };
        \addlegendentry{Elias-Fano, Twitter};
        \addplot[color=colorLevelDb,mark=triangle] coordinates { (0,0) };
        \addlegendentry{Elias-Fano, UniRef};
        \addplot[color=colorLevelDb,mark=diamond] coordinates { (0,0) };
        \addlegendentry{Elias-Fano, Wikipedia};

        \addplot[color=colorSeparator,mark=+] coordinates { (0,0) };
        \addlegendentry{Succincter};

      \end{axis}
    \end{tikzpicture}
    \caption{PaCHash with real world data sets using different index data structures. There is no practical implementation of Succincter \cite{Patrascu2008Succincter}, so we only give calculated values and no throughput. The space usage of Elias-Fano and Succincter is independent of the object size distribution, so we plot only one data set.}
    \label{fig:differentPaCHashConfigs}
  \end{subfigure}
  \caption{Space and query throughput of PaCHash with real world data sets.}
\end{figure*}
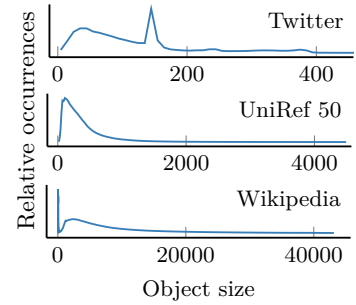
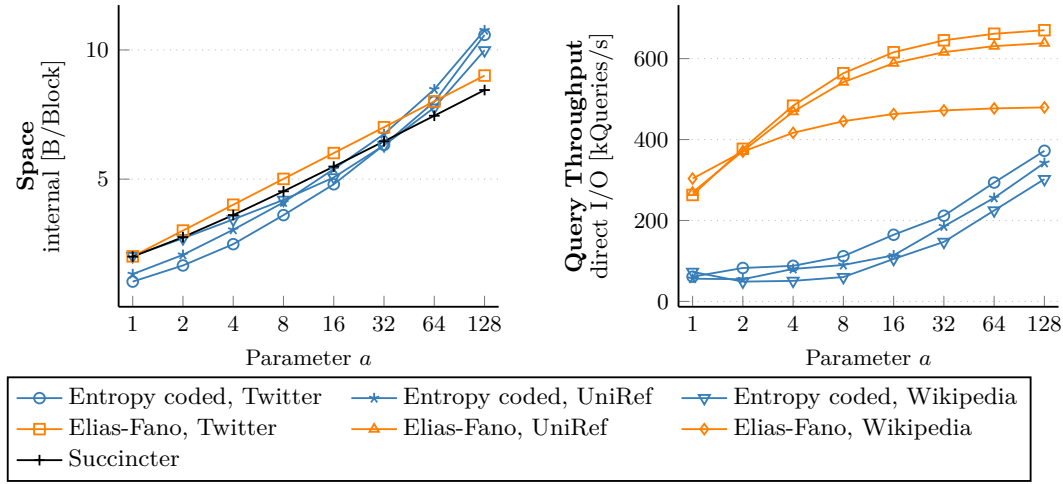

\subsection{PaCHash Configurations}
The parameter $a$ provides a trade-off between internal space usage and query performance, see \cref{tab:sizeQueryTimeTradeoff}.
\Cref{fig:bytesFetched} plots the bytes read per query, depending on the average object size and parameter $a$.
It confirms the results of our theoretical analysis in practice.
The throughput of the Elias-Fano representation increases when parameter $a$ gets larger because the SSD needs to load fewer blocks.
We also see that (at least for larger $a$) query times grow more slowly with object size than the I/O volume.
We choose $a=8$ for the comparison with competitors because it achieves a good balance between space usage ($\approx 5$~bits/block) and throughput ($\approx 700$k Queries/second).

\subsection{PaCHash with Real World Data Sets}
\label{s:realWorldDataSets}
\Cref{fig:differentPaCHashConfigs} compares throughput and space usage of PaCHash using real world size distributions and different index data structures.
The Twitter data set contains tweets from 01.08.--05.08.2021 and has only small objects.
The UniRef~50 protein database \cite{suzek2007uniref} contains some objects larger than the block size
and the LZ4 compressed \cite{collet2011lz4GitHub} English Wikipedia from November 2021 contains significantly larger objects.
See \cref{tab:dataSets,fig:dataSetsHist} for details.

The entropy coded bit vector saves up to one bit of internal memory per block for small $a$.
While it comes with a performance penalty caused by decompression (up to eight times slower than Elias-Fano), it is fast enough that it can be useful for some applications.
Succincter provides space usage lower than Elias-Fano but has no implementation.
Note that for \(a\leq 16\), the entropy coded bit vector requires even less space than succincter.
Only for \(a\geq 64\) it requires more space than Elias-Fano.

\subsection{Comparison with Competitors}
We compare PaCHash to other hash table data structures -- see \Cref{tab:competitorConfigurations} for the exact configurations used. 
\Cref{fig:competitorsIdentical} shows measurements for identical size objects in order to allow for a large set of competitors.
\Cref{fig:competitorsVariable} shows measurements for variable size objects containing fewer data points due to the lack of support for variable size objects by most competitors.
Perhaps the closest contender to PaCHash is the Separator method where our implementation partially allows variable object size.
It needs comparable internal space and has faster queries (always a single block access).
However, Separator not only has slower construction, but it also cannot achieve a load factor close to 100\% except for objects with identical size when the block size is divisible by the object size.
\Cref{fig:separatorLoadFactor} gives details showing load factors between 85\% and 95\% in typical cases.

The perfect hashing methods CHD and RecSplit have similar problems with respect to variable size objects and are more expensive with respect to internal space and construction costs.
While PTHash offers fast construction and queries, it does not support variable size objects and needs more internal space.
Cuckoo hashing needs no internal space but has more expensive queries and problems with variable size objects, like Separator or perfect hashing.

The object stores LevelDB, RocksDB, and SILT have much larger internal space requirements \emph{and} some external overhead.
In part this comparison is unfair since they have additional functionality like dynamic operation.
For SILT and LevelDB we have been able to extract the static part but still get considerably more space and lower performance than PaCHash.
\Cref{fig:competitorsIdentical,fig:competitorsVariable} contain measurements for both the full competitors and their static parts, so the overhead originating from dynamic operation can be read off them.
Comparing query throughput is complicated because of different file access modes, internal caching, and history dependent performance for the actual SSD accesses (the controller uses caching and rearranges data outside the control of the user).
We have therefore looked at two different access methods and also at only the index data structure.
However, overall, we get a consistent picture with Separator being the fastest method followed by PaCHash.
A comparison with the internal hash table std::unordered\_map is also instructive.
We naturally get faster construction and high internal space consumption.
Surprisingly, access to the internal data structure is only faster than PaCHash for very small inputs that fit into cache. 

While not as surprisingly, it should be noted that all object stores supporting variable size objects do not show any difference with respect to (internal and/or external) space requirements, construction and query throughput when storing variable size objects compared to identical size objects.
Thus, all benefits of PaCHash described above hold true for variable size objects as well.

\section{Conclusion and Future Work}\label{s:conclusion}
With  PaCHash, we present a static hash table
that can space-efficiently store variable size (possibly compressed) objects.
The objects are stored contiguously without the usual need for empty space to equalize the nonuniformity in assignment by a hash function. This is facilitated by an index data structure that needs only a constant number of internal memory bits per external memory block. In constant expected time, it yields a near-optimal range of blocks that contain the sought object.
Our implementation of PaCHash considerably outperforms previous object stores for variable size objects and even matches or outperforms systems that are purely internal memory or only handle objects of identical size.

\begin{figure*}[p]
  \input{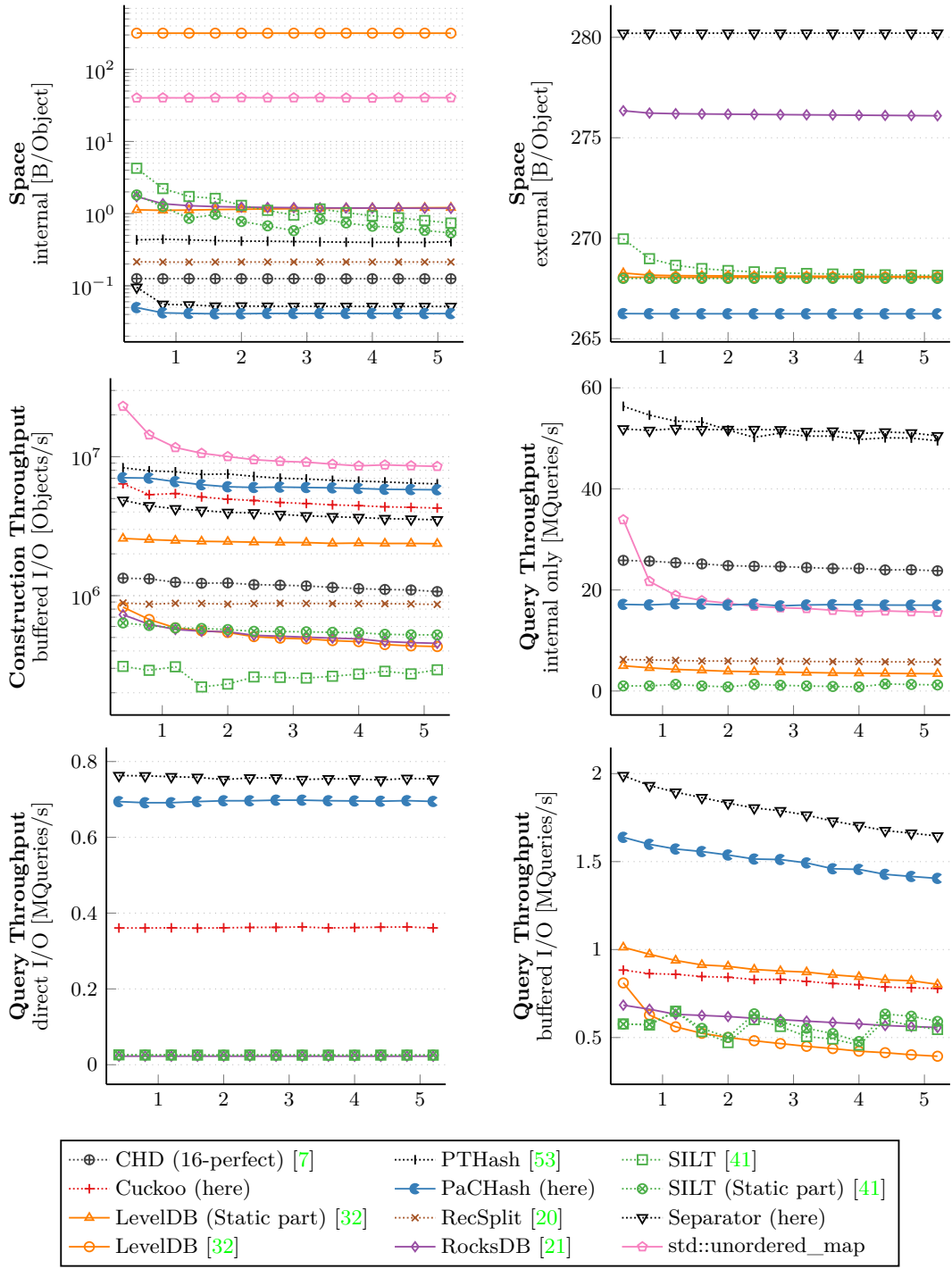}
  \caption{Comparison of object stores using objects of identical size 256 bytes.
  Keys are 8 byte random strings.}
  \label{fig:competitorsIdentical}
\end{figure*}

\begin{figure*}[p]
  \input{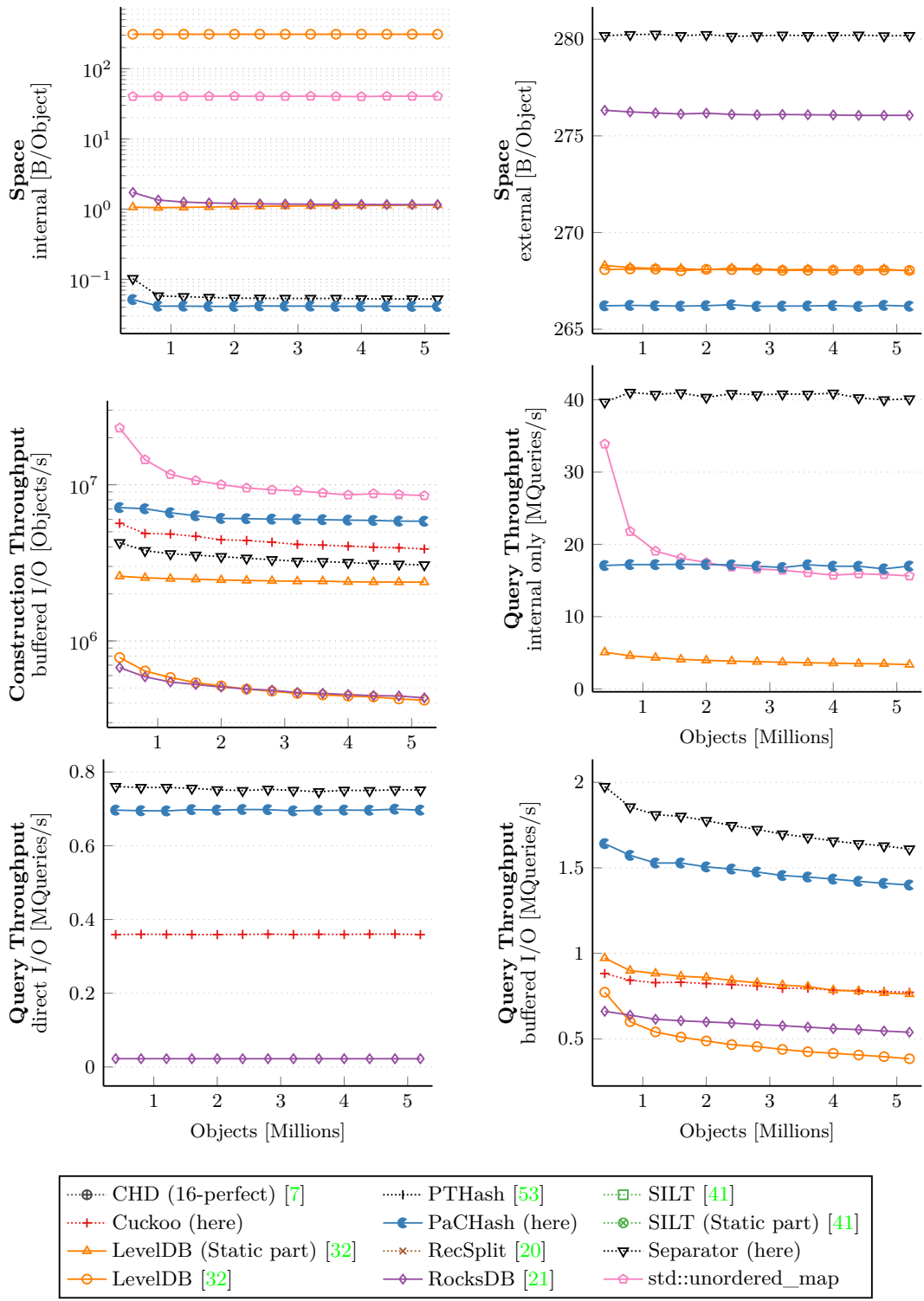}
  \caption{Comparison of object stores using objects of uniform random size $\in [128, 384]$ bytes (bottom).
  Keys are 8 byte random strings.
  Dotted lines indicate methods supporting only objects of identical size natively.
  We enhanced two of them to partially support variable size objects (see \cref{sec:variableSizeAdaption}).}
  \label{fig:competitorsVariable}
\end{figure*}

\begin{table*}[!h]
    \caption{Configurations of competitors}
    \label{tab:competitorConfigurations}
    \centering
      \begin{tabularx}{\textwidth}{l X}
        \toprule
        Competitor                           & Configuration parameters \\ \midrule
        CHD \cite{BelazzouguiBD09}           & Load factor $0.98$. $k=16$ collisions. Bin size $12$.\\
        Cuckoo (here, based on \cite{azar1994balanced,pagh2003basic}) & $2$ alternative positions for each object, loaded in parallel to reduce latency. Streamed queries with \emph{await any}. Load factor $0.95$. Random walk insertion. \\
        LevelDB \cite{google2021leveldb}     & No compression. Construction using a single, large write batch. No Bloom filters. \\
        PaCHash (here)                       & $a=8$. External blocks store a table of keys and offsets. Streamed queries with \emph{await any}. \\
        PTHash \cite{pibiri2021pthash}       & ``Optimizing the general trade-off'' \cite{pibiri2021pthash} with $\alpha=0.94, c=7$, D-D~Encoding. \\
        RecSplit \cite{esposito2020recsplit} & Leaf size $\ell=8$. Bucket size $b=2000$. \\
        RocksDB \cite{facebook2021rocksdb}   & Block cache disabled. No memory mapping or WAL. Queries use batches of size 64. No Bloom filters. \\
        Separator (here, based on \cite{gonnet1988external, larson1984file}) & $6$ bit separators. Load factor $0.96$. Streamed queries with \emph{await any}. \\
        SILT  \cite{lim2011silt}             & \texttt{testCombi.xml} configuration from original repository. \\
        std::unordered\_map                  & $8$ byte keys. $64$ bit pointers to object contents. \\
        \bottomrule
      \end{tabularx}
    \end{table*}

\begin{figure*}[!h]
  \input{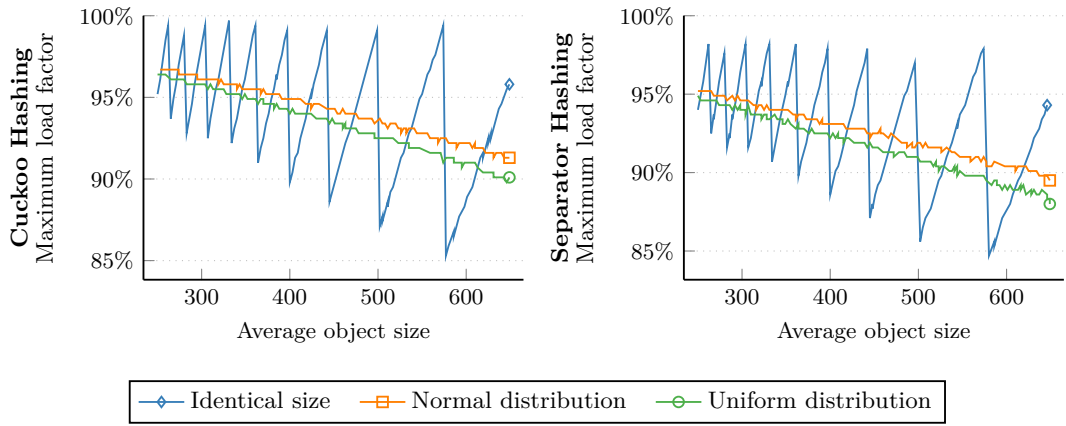}
  \caption{Maximum achievable load factor with different distributions of object sizes of our implementations of Separator Hashing and Cuckoo Hashing that support variable size objects. For an average object size $s$, the normal distribution has a variance of $s/5$ and the uniform random sizes are drawn from $[0.25s, 1.75s]$}
  \label{fig:separatorLoadFactor}
\end{figure*}

Future work might include integrating PaCHash into dynamic external memory object stores, as well as engineering fast and space efficient internal memory variants. On the theoretical side, we would like to better understand the space requirements and lower bounds of bit vectors with entropy coding. This includes relations to different variants of perfect hashing. Although our current analysis assumes random hash functions, PaCHash may also be provably efficient for more realistic simple hash functions.
Further possible space-saving can use the quotienting idea  \cite{koppl2022fast,bender2021all,arbitman2010backyard,Cle84} where some bits of the stored keys are derived from the (now invertible) hash function value. It is interesting how this works best in the presence of nonuniformly distributed keys.

\myparagraph{Acknowledgements.}
The authors would like to thank Peter Dillinger and Stefan Walzer for early discussions leading to this paper.
This project has received funding from the European Research Council (ERC) under the European Union’s Horizon 2020 research and innovation programme (grant agreement No. 882500).

\begin{center}
  \includegraphics[width=4cm]{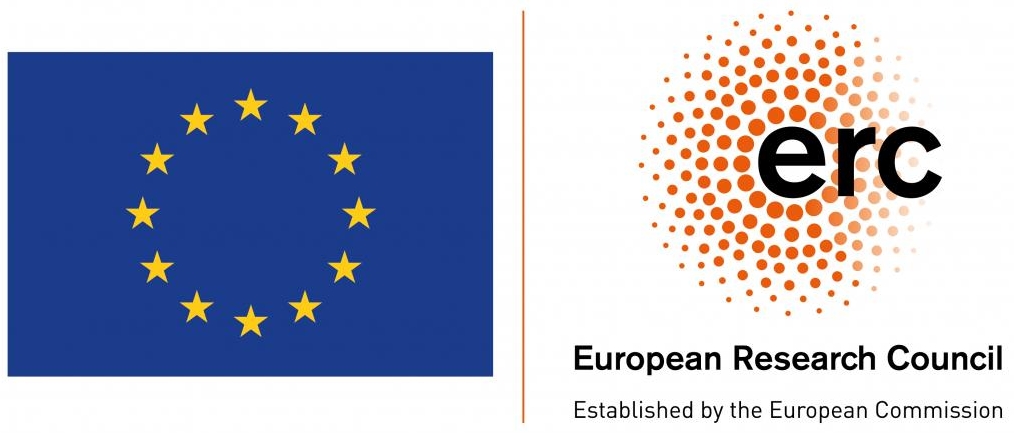}
\end{center}

\clearpage
\bibliographystyle{plainurl}
\bibliography{paper_uniform}

\clearpage

\appendix

\section{Space Usage of Succincter}
Now, we show in more detail how we can achieve the memory requirements of the internal memory index of PaCHash using the Succincter rank and select data structure \cite{Patrascu2008Succincter}.
\begin{proof}(Full Proof of \cref{the:internal_space_requirements_succincter})\label{pro:full_proof_internal_space_requirements}
  Remember that the internal memory data structure \(p\) of PaCHash stores \(m\) integers in the range \(1..am\) and must support predecessor queries.
  We represent all integers in a bit vector of length \((a+1)m\), using the same idea used for the most significant bits in Elias-Fano coding.
  That is, each of the \(m\) integers \(p_i\) is represented as a 1-bit in position \(i+p_i\).
  Answering predecessor queries (which we do not consider here) becomes harder to analyze, as we have no information about the distribution of 1-bits in the bit vector.

  Using Succincter, we can store a size-\(u\) bit vector that contains \(n\) ones and supports rank and select queries using only \(\log\binomial{u}{n}+\frac{u}{\log u}+\Ohtilde{u^{\frac{3}{4}}}\) bits.
  Since we have a length-\((a+1)m\) bit vector that contains \(m\) ones, we require \(\log\binomial{(a+1)m}{m}+\frac{(a+1)m}{\log \left((a+1)m\right)}+\Ohtilde{((a+1)m)^{\frac{3}{4}}}\) bits of space.
  We now show the upper bound for required memory using \cref{lem:bounds_binomial} and \(\Ohtilde{((a+1)m)^{\frac{3}{4}}}=o(m)\).

  \begin{align*}
    &\log\binomial{(a+1)m}{m}+o(m)
     <\log\left(\sqrt{\frac{(a+1)}{2\pi am}}\left(\frac{(a+1)^{a+1}}{a^{a}}\right)^m e^{-\frac{1}{12m+1}}\right)+o(m)\\
     &=\underbrace{\log\sqrt{\frac{(a+1)}{2\pi am}}}_{\leq 0}
       +\log\left(\left(\frac{(a+1)^{a+1}}{a^{a}}\right)^m\right)
       +\underbrace{\log e^{-\frac{1}{12m+1}}}_{\leq 0}+o(m)\\
     &\leq\log\left(\left(\frac{(a+1)^{a+1}}{a^{a}}\right)^m\right)+o(m)
     =m\left((a+1)\log(a+1)-a\log a\right)+o(m)\\
     &=m\left(a\log\left(\frac{a+1}{a}\right)+\log(a+1)\right)+o(m)
     \leq m\left(1.4427+\log(a+1)\right)+o(m)
  \end{align*}

  The last inequality is due to the fact that \(a\log\left(\frac{a+1}{a}\right)\) converges to \(1.4427\approx \frac{1}{\ln 2}\) from below.
  Overall, we require less than \(1.4427+\log(a+1)+o(1)\) bits for each external memory block.
\end{proof}

\begin{lemma}
  Using Succincter for representing monotonic sequences is almost space optimal.
\end{lemma}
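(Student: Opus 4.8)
The plan is to bracket the space used by the Succincter-based representation between the information-theoretic minimum for storing a monotone sequence and that minimum plus a provably lower-order term. First I would pin down the model: the object to be stored is a weakly increasing sequence $p=\seq{p_1,\ldots,p_k}$ with $p_i\in 1..U$ and $k\le U$, on which we must support predecessor queries. Counting multisets of size $k$ drawn from $U$ values, there are exactly $\binomial{U+k-1}{k}$ such sequences, so every encoding needs at least $\log\binomial{U+k-1}{k}$ bits in the worst case; call this benchmark $\mathrm{OPT}$.

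Next I would spell out the Elias--Fano-style reduction already sketched in \cref{sec:bit_vector_with_succincter}. The map sending $p$ to the length-$(U+k)$ bit vector that has a $1$ in position $i+p_i$ for each $i$ is a bijection onto the bit vectors of that length with exactly $k$ ones: the positions $i+p_i$ are strictly increasing precisely because $p$ is non-decreasing, and the inverse recovers $p_i$ from the position $q_i$ of the $i$-th one via $p_i=q_i-i$ (a routine check shows $1\le p_i\le U$ for every sequence so obtained). Hence storing $p$ is the same problem as storing this sparse bit vector, and a predecessor query on $p$ becomes a constant number of rank/select queries on it. Feeding the bit vector into Succincter (\cite[Theorem 2]{Patrascu2008Succincter}) yields a representation with rank and select support of size $\log\binomial{U+k}{k}+\Oh{\frac{U+k}{\log(U+k)}}+\Ohtilde{(U+k)^{3/4}}$ bits.

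The last step is the two-sided comparison. From $\binomial{U+k}{k}=\binomial{U+k-1}{k}\cdot\frac{U+k}{U}$ and $k\le U$ we get $\log\binomial{U+k}{k}\le\mathrm{OPT}+1$, so Succincter's leading term matches the lower bound up to an additive constant, while the redundancy $\Oh{(U+k)/\log(U+k)}+\Ohtilde{(U+k)^{3/4}}$ is $o(U+k)$. In the balanced regime $k=\Theta(U)$ relevant to PaCHash ($U=am$, $k=m$), a Stirling estimate --- essentially the one already carried out in the proof of \cref{the:internal_space_requirements_succincter} --- gives $\mathrm{OPT}=\Theta(U+k)$, so the redundancy is also $o(\mathrm{OPT})$. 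I would conclude that Succincter stores a \emph{queryable} monotone sequence in $\mathrm{OPT}(1+o(1))+\Oh{1}$ bits, which is the precise sense of ``almost space optimal'', and stress that a bare entropy-optimal code offers no $\Oh{1}$-time predecessor support, unlike here.

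The main obstacle is honesty about the redundancy rather than any deep argument: $\Oh{(U+k)/\log(U+k)}$ is genuinely lower-order relative to $\mathrm{OPT}$ only when $\mathrm{OPT}=\Om{(U+k)/\log(U+k)}$, which can fail for pathologically sparse sequences (e.g.\ $k$ polylogarithmic in $U$). So the delicate part of the write-up is to state the regime under which the clean $(1+o(1))\mathrm{OPT}$ bound holds without a caveat --- $U$ polynomially bounded in $k$, or simply $U=\Theta(k)$ as in PaCHash --- and to keep the Stirling estimate turning $\log\binomial{U+k-1}{k}$ into $\Theta(U+k)$ short by reusing the computation from \cref{the:internal_space_requirements_succincter}.
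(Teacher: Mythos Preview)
Your argument is correct and is, if anything, cleaner than the paper's, but the two proofs establish slightly different flavours of ``almost optimal''. You work with the exact count $\binom{U+k-1}{k}$ of weakly monotone sequences as the lower bound, observe that Succincter's leading term $\log\binom{U+k}{k}$ exceeds it by at most one bit in total, and then argue that the $o(U+k)$ Succincter redundancy is $o(\mathrm{OPT})$ in the balanced regime $k=\Theta(U)$, i.e.\ as $m\to\infty$ with $a$ fixed. The paper instead uses the looser lower bound $\binom{am}{m}$ (strictly monotone sequences), plugs it and the Succincter upper bound into the Stirling estimate of \cref{lem:bounds_binomial}, and computes the per-block gap between the two leading terms explicitly as $\frac{\log e}{a}+\Oh{1/a^3}$; its ``almost optimal'' is thus a statement about the gap vanishing as $a\to\infty$, while the $o(m)$ term is left implicit. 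Your route buys generality and a tighter lower bound; the paper's buys an explicit, interpretable constant in the PaCHash parameter~$a$.
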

\begin{proof}
  In \cref{the:internal_space_requirements_succincter} we have already seen that Succincter needs close to $m(\log(e)+\log(a+1))$ bits of space.
  $\binomial{am}{m}$ is the number of \emph{strictly} monotonic sequences of $m$ numbers in the range $1..am$ and thus a lower bound for the number of monotonic sequences. Using \cref{lem:bounds_binomial} once more, we get

\begin{align*}
  \log\binomial{am}{m}\approx m((a-1)\log\left(\frac{a}{a-1}+\log a\right)
\end{align*}

bits as a lower bound.
Looking at the difference divided by $m$ (i.e. bits per block), we get

\begin{align*}
  a&\log\frac{a+1}{a}+\log(a+1)-(a-1)\log\frac{a}{a-1}-\log a \\
    =&a\log\frac{a^2-1}{a^2}+\log\frac{a+1}{a-1}
    =\frac{\log e}{a}+\Oh{\frac{1}{a^3}}\punkt
\end{align*}

  This difference (obtained using Taylor series development) is much smaller than the $\log e+\log(a+1)$ bits per block needed by the Succincter data structure -- at least for sufficiently large $a$.
\end{proof}

\begin{lemma}
  \label{lem:bounds_binomial}
  For any $c>1, n>0$, let $\displaystyle f(n,c)\Is \sqrt{\frac{c}{(c-1)2\pi n}}\left(\frac{c^c}{(c-1)^{c-1}}\right)^n$, then
  \begin{align*}
    f(c,n)&\left(1-\frac{c^2-c+1}{12c(c-1)n}\right)
    < \binom{cn}{n}
    < f(c,n)e^{-\frac{1}{12n+1}}
    =f(c,n)\left(1-\frac{1}{12n}+\Oh{\frac{1}{n^2}}\right)\punkt
  \end{align*}
\end{lemma}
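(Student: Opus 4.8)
The plan is to derive both inequalities from the classical explicit form of Stirling's formula (Robbins' bounds),
\[
\sqrt{2\pi n}\left(\frac{n}{e}\right)^{n} e^{\frac{1}{12n+1}} \;<\; n! \;<\; \sqrt{2\pi n}\left(\frac{n}{e}\right)^{n} e^{\frac{1}{12n}},
\]
applied to each factorial in $\binom{cn}{n}=\frac{(cn)!}{n!\,((c-1)n)!}$ (so $n$ and $cn$ are integers, hence $n\geq 1$). After substitution, the powers of $n$ and the factors $e^{-cn},e^{-n},e^{-(c-1)n}$ cancel exactly, leaving $\frac{(cn)^{cn}}{n^{n}((c-1)n)^{(c-1)n}}=\bigl(\tfrac{c^{c}}{(c-1)^{c-1}}\bigr)^{n}$, and the three $\sqrt{2\pi\cdot}$ factors combine to $\sqrt{\tfrac{c}{(c-1)2\pi n}}$. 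Thus the leading term is exactly $f(c,n)$, and in each direction only an explicit exponential correction factor $e^{g}$ survives, where $g$ is a signed sum of three terms of the shape $\pm\frac{1}{12kn+\{0,1\}}$.

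For the upper bound I would use the Robbins upper bound for $(cn)!$ together with the lower bounds for $n!$ and $((c-1)n)!$; the correction exponent is $\frac{1}{12cn}-\frac{1}{12n+1}-\frac{1}{12(c-1)n+1}$, and this is at most $-\frac{1}{12n+1}$ precisely because $\frac{1}{12cn}\le\frac{1}{12(c-1)n+1}$, i.e.\ because $12n\ge 1$. The closing identity $f(c,n)e^{-\frac{1}{12n+1}}=f(c,n)\bigl(1-\frac{1}{12n}+\Oh{1/n^{2}}\bigr)$ then follows from $e^{-x}=1-x+\Oh{x^{2}}$ combined with $\frac{1}{12n+1}=\frac{1}{12n}+\Oh{1/n^{2}}$.

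The lower bound is the delicate part. Using the Robbins lower bound for $(cn)!$ and the upper bounds for the two denominator factorials, the correction exponent becomes $g=\frac{1}{12cn+1}-\frac{1}{12n}-\frac{1}{12(c-1)n}$. A direct computation gives $\frac{1}{12n}+\frac{1}{12(c-1)n}=\frac{c}{12(c-1)n}$ and $\frac{1}{12cn+1}=\frac{1}{12cn}-\varepsilon$ with $0<\varepsilon<\frac{1}{(12cn)^{2}}$, so that $g=-\delta-\varepsilon$ where $\delta\Is\frac{c^{2}-c+1}{12c(c-1)n}$ — exactly the coefficient in the statement. If $\delta\ge 1$ the claimed bound is vacuous; otherwise, since $\ln(1-\delta)\le-\delta-\frac{\delta^{2}}{2}$, it suffices to verify $\varepsilon\le\frac{\delta^{2}}{2}$, which after setting $u=c-1$ reduces to $\frac{c^{2}-c+1}{c-1}=u+1+\frac1u\ge\sqrt{2}$, and in fact $u+1+\frac1u\ge 3$ by AM--GM. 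Exponentiating $g\ge\ln(1-\delta)$ then yields $\binom{cn}{n}>f(c,n)e^{g}\ge f(c,n)(1-\delta)$, and the trailing Taylor identity for the lower side is obtained as above. The main obstacle is exactly this last comparison: the crude estimate $e^{x}\ge 1+x$ is too weak, so one must retain the quadratic term of $\ln(1-\delta)$ and check that the $\Theta(1/n^{2})$ slack $\varepsilon$ stays below it, which succeeds only because $\frac{c^{2}-c+1}{c-1}$ stays bounded away from $1$ (indeed $\ge 3$) uniformly over $c>1$.
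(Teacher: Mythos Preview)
Your approach is essentially the paper's: apply Robbins' Stirling bounds to each factorial in $\binom{cn}{n}=\frac{(cn)!}{n!\,((c-1)n)!}$, observe that the main factor is exactly $f(c,n)$, and then control the residual exponential. For the upper bound your argument and the paper's coincide verbatim, including the cancellation $\frac{1}{12cn}\le\frac{1}{12(c-1)n+1}$.

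The interesting point is the lower bound. The paper writes down the correction exponent $g=\frac{1}{12cn+1}-\frac{1}{12n}-\frac{1}{12(c-1)n}$ and then passes directly to $e^{g}>1-\frac{c^{2}-c+1}{12c(c-1)n}$ with no further comment. As you correctly observe, the naive step $e^{g}\ge 1+g$ falls short here by exactly the $\Theta(1/n^{2})$ gap $\varepsilon=\frac{1}{12cn}-\frac{1}{12cn+1}$, since $1+g=1-\delta-\varepsilon<1-\delta$. Your fix --- retaining the quadratic term via $\ln(1-\delta)\le-\delta-\delta^{2}/2$ and then checking $\varepsilon\le\delta^{2}/2$, which reduces to $\frac{c^{2}-c+1}{c-1}=u+1+\tfrac{1}{u}\ge 3>\sqrt{2}$ --- is clean and correct. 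So your proposal follows the same route as the paper but actually closes a step that the paper leaves unjustified.
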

\begin{proof}
  We use the identity $\binom{cn}{n}=\frac{(cn)!}{n!(cn-n)!}$ as well as Stirling's approximation
  \begin{align*}
    \sqrt{2\pi m}\left(\frac{m}{e}\right)^me^{\frac{1}{12m+1}}<m!<
    \sqrt{2\pi m}\left(\frac{m}{e}\right)^me^{\frac{1}{12m}}\punkt
  \end{align*}
For the upper bound we get
  \begin{align*}
    \binom{cn}{n} &= \frac{(cn)!}{(cn-n)!} \cdot \frac{1}{(cn-n)!} \\
    < & \frac{\sqrt{2\pi cn}\left(\frac{cn}{e}\right)^{cn}e^{\frac{1}{12cn}}}{\sqrt{2\pi n}\left(\frac{n}{e}\right)^{n}e^{\frac{1}{12n+1}}}
        \cdot\frac{1}{\sqrt{2\pi (c-1)n}\left(\frac{(c-1)n}{e}\right)^{(c-1)n}e^{\frac{1}{12(c-1)n+1}}}\\
    =&\sqrt{\frac{c}{(c-1)2\pi n}}\cdot\left(\frac{c^c}{(c-1)^{c-1}}\right)^n
         \cdot e^{{\color{blue}\frac{1}{12cn}}-\frac{1}{12n+1}-{\color{blue}\frac{1}{\smash{\underbrace{\scriptscriptstyle12(c-1)n+1}_{\leq 12cn}}}}}\punkt
  \end{align*}
  The claim follows by observing that the {\color{blue}leftmost} and {\color{blue}rightmost} term in the exponent of $e$ cancel out in the estimation.
  The asymptotic expansion of the upper bound can be obtained using Taylor series expansion.
  
  Similarly, for the lower bound we get
  \begin{align*}
    \binom{cn}{n} &= \frac{(cn)!}{n!}\cdot\frac{1}{(cn-n)!} \\
    > & \frac{\sqrt{2\pi cn}\left(\frac{cn}{e}\right)^{cn}e^{\frac{1}{12cn+1}}}{\sqrt{2\pi n}\left(\frac{n}{e}\right)^{n}e^{\frac{1}{12n}}}
        \cdot\frac{1}{\sqrt{2\pi (c-1)n}\left(\frac{(c-1)n}{e}\right)^{(c-1)n}e^{\frac{1}{12(c-1)n}}}\\
    =&\sqrt{\frac{c}{(c-1)2\pi n}}\cdot\left(\frac{c^c}{(c-1)^{c-1}}\right)^n
        \cdot e^{\frac{1}{12cn+1}-\frac{1}{12n}-\frac{1}{12(c-1)n}}\\
    >&\sqrt{\frac{c}{(c-1)2\pi n}}\cdot\left(\frac{c^c}{(c-1)^{c-1}}\right)^n
        \cdot\left(1-\frac{c^2-c+1}{12c(c-1)n}\right)\punkt
  \end{align*}%
\end{proof}

\end{document}